\documentclass{amsart}

\usepackage{amsthm}
\usepackage{amsmath}
\usepackage{amsfonts}
\usepackage{mathtools}
\usepackage{accents}
\usepackage{tikz}
\usepackage{comment}
\usepackage{enumitem}
\usepackage{cleveref}

\newtheorem{lemma}{Lemma}

\newtheorem{theorem}[lemma]{Theorem}

\newtheorem{definition}[lemma]{Definition}

\newcommand{\reals}{\mathbb{R}}
\newcommand{\naturals}{\mathbb{N}}

\newcommand{\lowtheta}{\theta_*}
\newcommand{\lowt}{\tau}
\newcommand{\var}{\mathrm{Var}}
\newcommand{\cov}{\mathrm{Cov}}
\newcommand{\est}{{\hat{\theta}}}
\newcommand{\estlow}{{\hat{\theta}_*}}
\newcommand{\estupp}{{\hat{\theta}^*}}
\newcommand{\gambles}{\mathcal{L}}

\newcommand{\estsigma}{\hat{\sigma}}
\newcommand{\estvar}{\hat{\sigma}^2}

\usepackage{url}

\title[Imprecise Monte Carlo simulation and iterative importance sampling]{Imprecise Monte Carlo simulation and iterative importance sampling for the estimation of lower previsions}
\author{Matthias C. M. Troffaes}
\address{Durham University, Department of Mathematical Sciences, UK}
\email{matthias.troffaes@durham.ac.uk}

\keywords{Monte Carlo; simulation; estimation; lower prevision; imprecise probability; importance sampling}

\begin{document}

\begin{abstract}
  We develop a theoretical framework for studying numerical estimation of lower previsions, generally applicable to two-level Monte Carlo methods, importance sampling methods, and a wide range of other sampling methods one might devise. We link consistency of these estimators to Glivenko-Cantelli classes, and for the sub-Gaussian case we show how the correlation structure of this process can be used to bound the bias and prove consistency. We also propose a new upper estimator, which can be used along with the standard lower estimator, in order to provide a simple confidence interval. As a case study of this framework, we then discuss how importance sampling can be exploited to provide accurate numerical estimates of lower previsions. We propose an iterative importance sampling method to drastically improve the performance of imprecise importance sampling. We demonstrate our results on the imprecise Dirichlet model.
\end{abstract}

\maketitle

\section{Introduction}
\label{sec:intro}

Various sensible approaches to sampling for estimation of lower previsions
can be found in the literature.
A case study comparing a wide range of techniques,
specifically aimed at reliability analysis,
can be found in \cite{2009:oberguggenberger:sensitivityreview}.

A first approach is to use \emph{two-level Monte Carlo sampling},
  where first one samples distributions over the (extreme points of the) credal set,
  and then samples from these distributions,
In the context of belief functions, one can also \emph{sample random sets}, and then evaluate the resulting belief function through optimisation over these sets
  \cite{1996:moral::importance}.
A third more sophisticated approach comprises of \emph{importance sampling} from a reference distribution, and
  then solve an optimisation problem over the importance sampling weights
  \cite{2009:oneill:importancesampling,2015:fetz:montecarlo,2016:zhang:importancesampling}.

Two-level Monte Carlo can be rather inefficient, especially if a credal set is high-dimensional or if it has a large number of extreme points. Moreover, two-level Monte Carlo generally only provides a non-conservative
solution.

Random set sampling is more efficient, but requires a large number
of optimisation problems to be solved (one for each sample), and
requires a suitable belief function approximation to be identified if
one wants to apply this to arbitrary lower previsions.

Importance sampling in imprecise probability
has been studied already in the '90s; see
for example
\cite{1996:moral::importance,1996:cano::importance,1997:hernandez::importance}
for some early works.
Importance sampling can be quite effective.
For example,
\cite{2015:deangelis:linesampling} have successfully used sensitivity analysis
over importance sampling weights with respect to the mean parameter of a
normal distribution.
In \cite{2015:fetz:montecarlo}, importance sampling is used over both the
mean and the variance parameters of a normal distribution using
a 2-dimensional grid.
A common issue with importance sampling is that estimates can be very poor
if the reference distribution is far off the optimal distribution.
This has been recently addressed in \cite{2017:troffaes::impmc},
where an iterative self-normalised importance sampling method is proposed
that requires far less
computational power compared to standard importance sampling methods
for sensitivity analysis, in the sense that far smaller samples can be
used, and that far smaller optimisation problems need to be solved.
A very similar approach using standard importance sampling was
proposed in \cite{2017:fetz::montecarlo}.

A second issue, which has received little attention in the literature,
concerns the bias and consistency of these estimates.
Two-level Monte Carlo methods and
importance sampling methods are essentially constructed as lower envelopes
of estimators for precise expectations.
To the best of the author's knowledge, in the context of imprecise probability,
the bias and consistency of such lower envelopes has not yet been studied
elsewhere in the literature.

The first purpose of this paper is to develop
a theoretical framework for studying numerical
estimation of lower previsions.
We do so by looking, in essence, at the
estimation of the minimum of an unknown function, given that
we can `simulate' the function to an arbitrary degree of precision.
This framework applies to
two-level Monte Carlo methods, importance sampling methods, and a wide
range of other sampling methods one might devise.
A first contribution of this paper is a link between consistency
of such envelopes and a Glivenko-Cantelli class condition.
In case our estimators can be described by a sub-Gaussian process
(by the central limit theorem,
this will often hold if the sample size is taken large enough),
we show how the correlation structure of this process
can be used to bound the bias.
We provide sufficient conditions for consistency,
and we also identify situations where consistency fails.
We rely heavily on stochastic process theory,
and in particular, on Talagrand's results on the
supremum of stochastic processes \cite{2014:talagrand}.

Although we obtain theoretical bounds, these bounds are not practical
in the sense that they require us to bound a rather tricky
functional. A second theoretical contribution of the paper is
that we propose a new `upper' estimator,
which can be used along with the standard
lower envelope estimator, in order to provide a single comprehensive confidence
interval. Unfortunately, the consistency
of this upper estimator is still an open problem,
although it is guaranteed to provide an upper bound (hence its name).
However, we can identify one situation under which the upper estimator
is unbiased.

A second aim of the paper is to study importance sampling for the estimation
of lower previsions.
We follow \cite{2009:oneill:importancesampling,2017:troffaes::impmc,2017:fetz::montecarlo},
and look specifically at how we can take envelopes
over the importance sampling weights directly in order to obtain
sampling estimates, without needing to draw large numbers of samples,
and without needing to solve large numbers of optimisation problems.
Unlike \cite{2009:oneill:importancesampling}, however, we do not
just look at Bayesian sensitivity analysis, and admit arbitrary sets
of distributions in our theoretical treatment.
Also unlike for instance
\cite{2009:oneill:importancesampling,2015:deangelis:linesampling,2015:fetz:montecarlo,2016:zhang:importancesampling,2017:fetz::montecarlo},
in this paper, we use self-normalised importance sampling
instead of standard importance sampling,
as it is known that this drastically speeds up calculations
\cite{2017:troffaes::impmc}.
In this paper, we will also show that
it ensures coherence of the resulting estimates (regardless of bias).

We then revisit the iterative importance sampling method
proposed in \cite{2017:troffaes::impmc,2017:fetz::montecarlo}.
The key novelty of this method is the idea of iteratively changing the importance
sampling distribution itself, in order to ensure that the final answer
has an effective sample size that is as close as possible to the
actual sample size.
In this paper, we focus on obtaining proper confidence intervals.
We reconfirm that the iterative method requires far less
computational power compared to standard importance sampling methods
for sensitivity analysis, in the sense that far smaller samples can be
used, and that far smaller optimisation problems need to be solved.
We also identify conditions under which we can obtain confidence intervals
almost instantly.

The paper is structured as follows. In \cref{sec:estimation},
we study the theory behind lower envelopes of estimators.
We study bias, consistency, and
two ways to obtain a confidence interval.
In \cref{sec:iteratedimpsamp}, we study importance sampling for the
estimation of lower previsions.
We briefly review the basic theory, and then
we revisit the iterative importance sampling method
proposed in \cite{2017:troffaes::impmc,2017:fetz::montecarlo},
from the perspective of the results derived in the preceding section.
Some numerical examples demonstrate our approach,
using the imprecise Dirichlet model.
We conclude in \cref{sec:conclusion}.

\section{Imprecise Estimation}
\label{sec:estimation}

The aim of this section is to provide a general theoretical framework
for studying the lower envelope of a parameterized set of estimators.
The main idea is that we can use such a lower envelope to estimate
the minimum of an unknown (or, hard to evaluate) function.
The application that we have in mind is one where we have
a parameterized set of estimators for the expectation of some quantity,
and we wish to estimate the lower expectation.
However, the theory developed in this section
is not tied to any specific parameterized set of estimators.

\subsection{Lower and Upper Estimators for the Minimum of a Function}

Let $X$ be a random variable (or, vector of random variables)
taking values in some set $\mathcal{X}$.
Let $t$ be a parameter taking values in some set $\mathcal{T}$.
Let $\est\colon\mathcal{X}\times\mathcal{T}\to\reals$ be an arbitrary function,
such that for every $t\in\mathcal{T}$, $\est(X,t)$ is an unbiased estimator for
$\theta(t)$.
In other words, $\est(X,\cdot)$ is an unbiased
estimator for the function $\theta(\cdot)$.
We explicitly isolate the random part of our estimator by writing it
as a function of a random variable $X$.
This decomposition is essential later, as it will allow us to construct
an upper estimator.

More specifically, we assume that
for every fixed $t$, $\est(\cdot,t)$ is measurable,
\begin{equation}
  E(\est(X,t))=\theta(t),
\end{equation}
and $\var(\est(X,t))$ is finite (and hopefully quite small).

We assume that the function $\theta(t)$ has a minimum:
\begin{equation}
  \lowtheta\coloneqq\min_{t\in\mathcal{T}}\theta(t).
\end{equation}
Our aim is to construct an estimator for that minimum.

In case of estimation of lower previsions,
consider a gamble $f$,
a (weak-*) compact collection $P_t$ of previsions (expectation operators)
parameterized by $t$,
and a collection of estimators $\est(X,t)$ for $P_t(f)$.
With
\begin{align}
  \theta(t)&\coloneqq P_t(f)
\end{align}
we aim to construct estimators for $\underline{P}(f)\coloneqq\lowtheta$ and study the properties
of such estimators.

Throughout, we assume that $\mathcal{T}$ is a compact subset of $\reals^k$,
and that $\est(x,t)$ is continuous in $t$ for all $x$.
This guarantees that $\est(x,t)$ can be minimised over $t$ for any
value of $x\in\mathcal{X}$.
Because $\reals^k$ is separable, there is a countable subset
$\mathcal{T}'$ of $\mathcal{T}$ such that, for all $x$,
\begin{equation}
  \inf_{t\in\mathcal{T}'}\est(x,t)=\min_{t\in\mathcal{T}}\est(x,t).
\end{equation}
Consequently, $\min_{t\in\mathcal{T}}\est(x,t)$ is measurable,
and additionally there is a measurable function $\lowt\colon\mathcal{X}\to\mathcal{T}$
such that
\begin{equation}
  \lowt(x)\in\arg\min_{t\in\mathcal{T}} \est(x,t).
\end{equation}

The next theorem provides us with a lower and upper estimator for
$\lowtheta$.

\begin{theorem}\label{thm:lowuppest}
  Assume $X$ and $X'$ are i.i.d.\ and let
  \begin{align}
    \estlow(X)&\coloneqq \est(X,\lowt(X))=\min_{t\in\mathcal{T}}\est(X,t) \\
    \estupp(X,X')&\coloneqq \est(X, \lowt(X'))
  \end{align}
  Then
  \begin{align}\label{eq:estlowupp:ineq}
    \estlow(X)
    \le
    \estupp(X,X')
  \end{align}
  and
  \begin{align}
    E(\estlow(X))
    \le
    \lowtheta
    \le
    E(\estupp(X,X')).
  \end{align}
\end{theorem}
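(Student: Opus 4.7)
The plan is to tackle the three inequalities in the order they appear, using three very different mechanisms: the definition of the minimum for \eqref{eq:estlowupp:ineq}, a swap of expectation and minimum for the lower inequality, and the tower property of conditional expectation (exploiting independence of $X$ and $X'$) for the upper inequality.

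For \eqref{eq:estlowupp:ineq}, I would simply observe that, for any realisation of $X$ and $X'$, the value $\lowt(X')$ is a particular element of $\mathcal{T}$, so
\begin{equation*}
  \estlow(X)=\min_{t\in\mathcal{T}}\est(X,t)\le\est(X,\lowt(X'))=\estupp(X,X').
\end{equation*}
Taking expectations of both sides immediately yields $E(\estlow(X))\le E(\estupp(X,X'))$, so the remaining task is to sandwich $\lowtheta$ between the two expectations.

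For the lower bound $E(\estlow(X))\le\lowtheta$, I would fix an arbitrary $t\in\mathcal{T}$, note that $\estlow(X)\le\est(X,t)$ pointwise in $x$, take expectations to get $E(\estlow(X))\le E(\est(X,t))=\theta(t)$, and then minimise the right-hand side over $t\in\mathcal{T}$. For the upper bound $\lowtheta\le E(\estupp(X,X'))$, I would condition on $X'$. Since $X$ and $X'$ are independent and $\lowt(X')$ is $X'$-measurable, the conditional expectation becomes
\begin{equation*}
  E\bigl(\est(X,\lowt(X'))\mid X'\bigr)=\theta(\lowt(X')),
\end{equation*}
because unbiasedness gives $E(\est(X,s))=\theta(s)$ at each deterministic $s$. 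Since $\theta(\lowt(X'))\ge\lowtheta$ almost surely by definition of $\lowtheta$ as the minimum, taking outer expectations yields $E(\estupp(X,X'))\ge\lowtheta$.

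The only real subtlety is measurability: one must know that $\lowt$ is measurable so that $\theta(\lowt(X'))$ is a well-defined random variable and that $\est(X,\lowt(X'))$ is jointly measurable in $(X,X')$, so that Fubini/conditioning applies. But this has already been secured by the setup in the excerpt, via compactness of $\mathcal{T}$, continuity of $\est(x,\cdot)$, and the measurable selection of $\lowt$ from a countable dense $\mathcal{T}'\subseteq\mathcal{T}$. Given that, the argument reduces to the three short steps above, and I do not expect any further obstacle.
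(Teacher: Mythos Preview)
Your proposal is correct and follows essentially the same approach as the paper's proof: the pointwise minimum argument for \eqref{eq:estlowupp:ineq}, the swap of expectation and minimum for the lower bound, and conditioning on $X'$ together with independence and the law of iterated expectation for the upper bound. Your remark on measurability is apt and matches the assumptions the paper sets up beforehand.
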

\begin{proof}
  For all $x$ and $x'\in\mathcal{X}$,
  \begin{equation}
    \estlow(x)=\min_{t\in\mathcal{T}}\est(x,t)
    \le
    \est(x,\lowt(x'))=\estupp(x,x')
  \end{equation}
  This proves the first inequality.

  For the second inequality, note that, for all $t\in\mathcal{T}$,
  \begin{equation}
    \estlow(X)=\min_{t'\in\mathcal{T}}\est(X,t')\le \est(X,t)
  \end{equation}
  Now take the expectation, and then the minimum over $t$, from both
  sides of this inequality, to arrive at
  \begin{equation}
    E(\estlow(X))\le \min_{t\in\mathcal{T}}E(\est(X,t))=\lowtheta
  \end{equation}

  For the final inequality, first note that for all $x'\in\mathcal{X}$,
  by the independence of $X$ and $X'$,
  \begin{equation}
    E(\est(X,\lowt(x'))\mid X'=x')
    =
    E(\est(X,\lowt(x')))
    \ge
    \min_{t\in\mathcal{T}}E(\est(X,t))
    =\lowtheta.
  \end{equation}
  Consequently, by the law of iterated expectation,
  \begin{align}
    E(\estupp(X,X'))
    &=
    E(\est(X,\lowt(X')))
    =
    E(E(\est(X,\lowt(X'))\mid X'))
    \ge
    \lowtheta.
  \end{align}
\end{proof}

The estimator $\estlow(X)$ is used throughout the literature (see \cite{2009:oberguggenberger:sensitivityreview} and references therein) as an
estimator for lower previsions.  In that context, it is not normally
noted in the literature that $\estlow(X)$ is negatively biased, so in
that sense, the above theorem provides a `new' result.  We shall see
that the bias can be very large in specifically constructed examples.
However, provided that our estimator has the form of a sample mean,
we will show that $\estlow(X)$ is a consistent estimator for
$\lowtheta$ if $\{\est(X,t)\}_{t\in\mathcal{T}}$ is a
Glivenko-Cantelli class. If, additionally, $\est(X,t)$ is
(approximately) a Gaussian process over $t\in\mathcal{T}$---this holds if the estimators
satisfy the central limit theorem which will be the case for most
estimators used in practice---then we show
consistency is satisfied if the
process has a finite Talagrand functional, and in that case we can
also explicitly bound the bias as a function of this functional.
Moreover, if for every realisation of $x$,
$\est(x,t)$ is a coherent prevision when seen as a function of the
gamble $f$, then $\estlow(x)$ is guaranteed to be a coherent lower prevision
in itself.

The estimator $\estupp(X,X')$ is novel, as far as we know.
As we shall see, we cannot yet prove much about it.
Its main use is that it allows us to bound the bias of
$\estlow(X)$ without having to explicitly bound the Talagrand functional,
which is a challenging problem for general estimators.
Currently, we do not know the conditions under which $\estupp(X,X')$
is consistent in general.
Further, it is easy to see that $\estupp(x,x')$ is not coherent
in general when seen as a function of $f$.
In this paper, we will simply use $\estupp(X,X')$
as a diagnostic for $\estlow(X)$
to avoid complicated generic chaining arguments to obtain bounds
for the Talagrand functional.

Note that $\est(X,T)$ is a positively biased estimator for $\lowtheta$,
for \emph{any} random variable $T$ taking values in $\mathcal{T}$,
as long as $T$ is independent of $X$. So we do not necessarily have to
take $T=\tau(X')$ as in the theorem,
although this seems the most obvious choice.
The theoretically optimal choice for $T$
is to take $T=\arg\min_{t\in\mathcal{T}}\theta(t)$,
and in that case $\est(X,T)$ is an unbiased estimator for $\lowtheta$.
Finding this optimal choice amounts to calculating
$\lowtheta=\min_{t\in\mathcal{T}}\theta(t)$,
which is the quantity we are aiming to estimate.
Therefore, $T=\arg\min_{t\in\mathcal{T}}\theta(t)$ is not a useful choice.

\subsection{Unbiased Case}

The following theorem states a simple condition under which both
$\estlow(X)$ and $\estupp(X,X')$ estimators are unbiased.

\begin{theorem}
  If there is a $t^*\in\mathcal{T}$ such that $\est(X,t^*)\le \est(X,t)$
  for all $t\in\mathcal{T}$, then
  \begin{align}
    \estlow(X)=\estupp(X,X')=\est(X,t^*)
  \end{align}
  and consequently,
  \begin{align}
    E(\estlow(X))
    =
    \lowtheta
    =
    E(\estupp(X,X')).
  \end{align}
\end{theorem}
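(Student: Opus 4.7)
The plan is to observe that the hypothesis forces $t^*$ to be a (not necessarily unique) minimizer of $\est(X,\cdot)$ for every realisation, so the minimising selector $\lowt$ can effectively be replaced by the constant $t^*$ inside both estimators; taking expectations then reduces each estimator to $\theta(t^*)$, and a simple monotonicity-of-expectation argument shows $\theta(t^*)=\lowtheta$.

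More concretely, first I would note that the hypothesis $\est(X,t^*)\le \est(X,t)$ for all $t\in\mathcal{T}$ (holding almost surely in $X$) gives $\min_{t\in\mathcal{T}}\est(X,t)=\est(X,t^*)$ almost surely, and similarly $\min_{t\in\mathcal{T}}\est(X',t)=\est(X',t^*)$ almost surely. Consequently the minimiser $\lowt(X')$ achieves the same value of $\est(X',\cdot)$ as $t^*$; and while $\lowt(X')$ need not equal $t^*$ pointwise if the minimiser is non-unique, this does not matter because the hypothesis also guarantees $\est(X,\lowt(X'))=\est(X,t^*)$. Indeed, by the hypothesis applied both ways, $\est(X,t^*)\le \est(X,\lowt(X'))$, while by the definition of $\lowt(X')$ as a minimiser of $\est(X',\cdot)$ we cannot directly compare $\est(X,\lowt(X'))$ and $\est(X,t^*)$; but again using the hypothesis pointwise in $X$ we have $\est(X,t^*)\le \est(X,t)$ for every $t$, in particular $t=\lowt(X')$, and reversing this by noting that $t^*$ itself is also a minimiser of $\est(X,\cdot)$ gives equality. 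Hence
\begin{equation}
  \estlow(X)=\est(X,t^*)=\estupp(X,X')
\end{equation}
almost surely.

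Taking expectations on both sides yields $E(\estlow(X))=E(\estupp(X,X'))=\theta(t^*)$. It remains to identify $\theta(t^*)$ with $\lowtheta$. This follows directly by taking expectations in the hypothesis: since $\est(X,t^*)\le \est(X,t)$ almost surely for every fixed $t\in\mathcal{T}$, we get $\theta(t^*)=E(\est(X,t^*))\le E(\est(X,t))=\theta(t)$, and minimising the right-hand side over $t$ gives $\theta(t^*)\le \lowtheta$. The reverse inequality $\lowtheta\le \theta(t^*)$ is immediate from the definition of $\lowtheta$ as the minimum of $\theta$ over $\mathcal{T}$, so $\theta(t^*)=\lowtheta$, completing the argument.

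There is essentially no hard step here: the theorem is a degenerate special case of \cref{thm:lowuppest} in which the uniform-minimiser assumption collapses the randomness in the selector $\lowt$. The only mildly delicate point is handling possible non-uniqueness of the minimiser of $\est(X,\cdot)$, but this is resolved by observing that the hypothesis forces $t^*$ and $\lowt(X)$ (resp.\ $\lowt(X')$) to yield identical values of $\est(X,\cdot)$.
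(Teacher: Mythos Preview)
Your argument for $\estlow(X)=\est(X,t^*)$ is fine, and so is your identification $\theta(t^*)=\lowtheta$. The gap is in the step $\estupp(X,X')=\est(X,t^*)$ for an \emph{arbitrary} measurable selector $\lowt$. You correctly observe that $\est(X,t^*)\le \est(X,\lowt(X'))$ by the hypothesis, but the ``reversal'' you invoke---``noting that $t^*$ itself is also a minimiser of $\est(X,\cdot)$''---only reproduces the same inequality; it does not yield $\est(X,\lowt(X'))\le \est(X,t^*)$. The fact that $\lowt(X')$ and $t^*$ give the same value of $\est(X',\cdot)$ says nothing about how they compare under $\est(X,\cdot)$ when $X\neq X'$. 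Concretely, take $\mathcal{T}=\{t^*,t_1\}$ with $\est(x,t^*)\equiv 0$ and $\est(x,t_1)=0$ on a set $A$ of intermediate probability and $=1$ off $A$; choosing $\lowt(x)=t_1$ on $A$ is a legitimate selector, yet $\estupp(X,X')=1$ with positive probability, so both the pointwise equality and $E(\estupp(X,X'))=\lowtheta$ fail.

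The paper's proof avoids this entirely by exploiting the freedom in the definition of $\lowt$: since $t^*$ lies in $\arg\min_t \est(x,t)$ for every $x$, one may simply \emph{choose} $\lowt\equiv t^*$. With that choice, $\estlow(X)=\est(X,t^*)=\estupp(X,X')$ is immediate, and the sandwich $E(\estlow)\le\lowtheta\le E(\estupp)$ from \cref{thm:lowuppest} collapses to equalities. So the missing idea is not an additional inequality but rather the observation that the theorem is really a statement about a suitable version of $\lowt$, and the constant selector $t^*$ is the one to take.
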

\begin{proof}
  Simply note that we may choose $\lowt(x)=t^*$ for all $x\in\mathcal{X}$.
  Now apply \cref{thm:lowuppest}.
\end{proof}

So, in this case, the lower and upper estimators coincide,
and therefore there is no bias.
The theorem provides a reason for choosing $\est(X,\tau(X'))$ as our
upper estimator.
Indeed, if there is a $t^*$ such that $\est(X,t^*)\le \est(X,t)$,
then $\tau(X')$ will identify it.
Normally however, there is no $t^*$ such that $\est(X,t^*)\le \est(X,t)$ for all $t$.

\subsection{Consistency}
\label{sec:consistency}

A good estimator will allow us to control the error,
through the sample size. For example,
an estimator may take the form of a sample mean:
\begin{equation}
  \est_n(X,t)=\frac{1}{n}\sum_{i=1}^n \est(X_i,t)
\end{equation}
where $X\coloneqq (X_i)_{i\in\naturals}$ and where the $X_i$ are i.i.d.\ random
variables taking values in $\mathcal{X}$.
Such estimators are related to so-called empirical processes.

For any fixed $t\in\mathcal{T}$,
we can make the error arbitrary small
because
\begin{equation}
  \var(\est_n(X,t))\propto 1/n.
\end{equation}
An estimator where the error can be made arbitrarily small
is called consistent. More formally \cite[Chapter~6]{2002:schervish:degroot}:

\begin{definition}\label{def:consistent}
  A (sequence of) estimator(s) $Z_n$ for $z\in\reals$
  is called \emph{consistent} whenever,
  for all $\epsilon>0$, we have that
  \begin{equation}
    \lim_{n\to\infty}P(|Z_n-z|>\epsilon)=0.
  \end{equation}
\end{definition}

A natural question is: if $\est_n(X,t)$ is consistent for $\theta(t)$,
will
\begin{equation}
  \estlow_n(X)\coloneqq\min_{t\in\mathcal{T}}\est_n(X,t)
\end{equation}
be consistent for $\lowtheta$?
Even if $\estlow_n(X)$ is biased,
consistency of $\estlow_n(X)$ would help,
because in that case the bias can be made arbitrarily small by increasing $n$.
First, we show that $\estlow_n(X)$ is consistent in case $\mathcal{T}$
is finite, based on a simple union bound.
Then, we consider the infinite case,
linking consistency to so-called Glivenko-Cantelli classes.
Under
the additional assumption that $\est_n(X,t)$ is approximately
normally distributed,
we will also quantify the bias of $\estlow_n(X)$
based on the Talagrand functional.

\begin{theorem}
  Suppose that, for all $t\in\mathcal{T}$, 
  $\est_n(X,t)$ is a consistent estimator for $\theta(t)$.
  If $\mathcal{T}$ is finite,
  then $\estlow_n(X)$ is a consistent
  estimator for $\lowtheta$.
\end{theorem}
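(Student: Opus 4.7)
The plan is to exploit the fact that on a finite set the $\min$ operation is $1$-Lipschitz with respect to the sup-norm, and then use a simple union bound to transfer pointwise consistency at each $t$ into uniform consistency across the finite set $\mathcal{T}$.

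First I would establish the elementary inequality
\begin{equation}
  \left|\min_{t\in\mathcal{T}}\est_n(X,t)-\min_{t\in\mathcal{T}}\theta(t)\right|
  \le
  \max_{t\in\mathcal{T}}|\est_n(X,t)-\theta(t)|.
\end{equation}
This follows from the standard two-sided argument: if $t_1$ achieves the minimum on the left and $t_2$ on the right, then $\min_t \est_n(X,t) \le \est_n(X,t_2) \le \theta(t_2) + \max_t|\est_n(X,t)-\theta(t)|$ and symmetrically in the other direction. No compactness or continuity is needed here since $\mathcal{T}$ is finite.

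Next, fix $\epsilon>0$. By the inequality above,
\begin{equation}
  P\bigl(|\estlow_n(X)-\lowtheta|>\epsilon\bigr)
  \le
  P\left(\max_{t\in\mathcal{T}}|\est_n(X,t)-\theta(t)|>\epsilon\right)
  \le
  \sum_{t\in\mathcal{T}} P\bigl(|\est_n(X,t)-\theta(t)|>\epsilon\bigr),
\end{equation}
where the last step is a union bound over the finite index set $\mathcal{T}$. By the assumed consistency of $\est_n(X,t)$ at each $t$, each summand tends to $0$ as $n\to\infty$, and since $\mathcal{T}$ is finite the sum also tends to $0$. Hence $\estlow_n(X)$ is consistent for $\lowtheta$.

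There is no real obstacle here: finiteness of $\mathcal{T}$ is precisely what makes the union bound work, and the Lipschitz property of $\min$ converts pointwise closeness into closeness of minima. The interest of the statement lies in signalling that the infinite case, where the union bound blows up, genuinely requires the Glivenko-Cantelli machinery introduced in the subsequent results.
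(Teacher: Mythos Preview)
Your proof is correct and, like the paper's, rests on a union bound over the finite index set $\mathcal{T}$. The route differs slightly: the paper splits $P(|\estlow_n(X)-\lowtheta|>\epsilon)$ into an upper tail and a lower tail and applies Boole's inequality separately to each, whereas you first invoke the $1$-Lipschitz property of $\min$ with respect to the sup-norm to pass directly to $P(\max_{t\in\mathcal{T}}|\est_n(X,t)-\theta(t)|>\epsilon)$ and then apply a single union bound. Your version is a little more compact and makes the underlying mechanism (uniform closeness implies closeness of minima) explicit; the paper's two-tail argument is slightly more hands-on but arrives at the same estimate. Both are standard and there is nothing to choose between them mathematically.
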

\begin{proof}
  Fix $\epsilon>0$. We know that, for all $t\in\mathcal{T}$,
  \begin{equation}
    \lim_{n\to\infty}P(|\est_n(X,t)-\theta(t)|>\epsilon)=0
  \end{equation}
  We need to show that
  \begin{equation}
    \lim_{n\to\infty}P(|\estlow_n(X)-\lowtheta|>\epsilon)=0
  \end{equation}
  Because
  \begin{align}
    P(|\estlow_n(X)-\lowtheta|>\epsilon)
    &=
    P\left(\estlow_n(X)>\lowtheta+\epsilon\right)
    +
    P\left(\estlow_n(X)<\lowtheta-\epsilon\right),
  \end{align}
  it suffices to show that both terms on the right hand side converge to zero.

  Both terms can be easily bounded using Boole's inequality.
  Indeed,
  \begin{align}
    P\left(\estlow_n(X)>\lowtheta+\epsilon\right)
    &=
    P\left(
      \min_{t\in\mathcal{T}}\est_n(X,t)>\min_{s\in\mathcal{T}}\theta(s)+\epsilon\right)
    \\
    &=
    P\left(
      \bigcup_{s\in\mathcal{T}}\left\{
      \min_{t\in\mathcal{T}}\est_n(X,t)>\theta(s)+\epsilon\right\}\right)
    \\
    &\le
    \sum_{s\in\mathcal{T}}
    P\left(
      \min_{t\in\mathcal{T}}\est_n(X,t)>\theta(s)+\epsilon\right)
    \\
    &\le
    \sum_{s\in\mathcal{T}}
    P\left(
      \est_n(X,s)>\theta(s)+\epsilon\right)
  \end{align}
  The last expression converges to zero because $\est_n(X,s)$
  is a consistent estimator for $\theta(s)$.

  Similarly,
  \begin{align}
    P\left(\estlow_n(X)<\lowtheta-\epsilon\right)
    &=
    P\left(
      \min_{t\in\mathcal{T}}\est_n(X,t)<\min_{s\in\mathcal{T}}\theta(s)-\epsilon\right)
    \\
    &=
    P\left(
      \bigcup_{t\in\mathcal{T}}\left\{
      \est_n(X,t)<\min_{s\in\mathcal{T}}\theta(s)-\epsilon\right\}\right)
    \\
    &\le
    \sum_{t\in\mathcal{T}}
    P\left(
      \est_n(X,t)<\min_{s\in\mathcal{T}}\theta(s)-\epsilon\right)
    \\
    &\le
    \sum_{t\in\mathcal{T}}
    P\left(
      \est_n(X,t)<\theta(t)-\epsilon\right)
  \end{align}
  Again, the last expression converges to zero because $\est_n(X,t)$
  is a consistent estimator for $\theta(t)$.
\end{proof}

The case where $\mathcal{T}$ is not finite is considerably more difficult.
A first important insight is that
the problem can be linked to so-called Glivenko-Cantelli classes
for those estimators that take the form of a sample mean (which covers
a large range of estimators).
\begin{definition}\cite[p.~272]{2014:talagrand}
  Consider a sequence $X_1$, $X_2$, \dots of i.i.d.\ random variables
  taking values in $\mathcal{X}$.
  A countable set $\mathcal{F}$ of measurable functions on $\mathcal{X}$
  is called a
  \emph{Glivenko-Cantelli class} if
  \begin{equation}
    \lim_{n\to\infty}E\left(
      \sup_{f\in\mathcal{F}}\left|
        \frac{1}{n}\sum_{i=1}^n (f(X_i)-E(f(X_i)))
      \right|
    \right)
    =
    0
  \end{equation}
\end{definition}
Remember that $\mathcal{T}'$ is a countable dense subset of
$\mathcal{T}$ whose existence is guaranteed under the assumptions made
at the beginning of the paper.
We can now state our first main result.
\begin{theorem}
  If
  the set of functions $\{\est(\cdot,t)\colon t\in\mathcal{T}'\}$
  is a Glivenko-Cantelli class,
  then
  $\estlow_n(X)\coloneqq\min_{t\in\mathcal{T}}\frac{1}{n}\sum_{i=1}^n \est(X_i,t)$
  is a consistent estimator for $\lowtheta$.
\end{theorem}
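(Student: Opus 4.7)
My plan is to reduce the problem to controlling a single uniform deviation, then apply the Glivenko-Cantelli hypothesis together with Markov's inequality.

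The first step is to replace minima over $\mathcal{T}$ by infima over the countable dense subset $\mathcal{T}'$. By the earlier assumption that $\est(x,t)$ is continuous in $t$, we have $\inf_{t\in\mathcal{T}'}\est_n(x,t)=\min_{t\in\mathcal{T}}\est_n(x,t)=\estlow_n(x)$; similarly, since $\theta(t)=E(\est(X,t))$ inherits continuity from $\est(\cdot,t)$ (via dominated convergence, using the finite variance / integrability assumption — or this can be postulated directly), $\inf_{t\in\mathcal{T}'}\theta(t)=\min_{t\in\mathcal{T}}\theta(t)=\lowtheta$. This reduction is what makes the Glivenko-Cantelli hypothesis, which is stated for a \emph{countable} family, directly applicable.

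Next, I would use the standard elementary bound
\begin{equation}
  \left|\inf_{t\in\mathcal{T}'}a(t)-\inf_{t\in\mathcal{T}'}b(t)\right|\le\sup_{t\in\mathcal{T}'}|a(t)-b(t)|
\end{equation}
applied to $a(t)=\est_n(X,t)$ and $b(t)=\theta(t)$, to deduce
\begin{equation}
  |\estlow_n(X)-\lowtheta|\le\sup_{t\in\mathcal{T}'}\left|\frac{1}{n}\sum_{i=1}^n\bigl(\est(X_i,t)-E(\est(X_i,t))\bigr)\right|.
\end{equation}

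The final step is purely probabilistic. By the Glivenko-Cantelli hypothesis applied to $\mathcal{F}=\{\est(\cdot,t)\colon t\in\mathcal{T}'\}$, the expectation of the right-hand side tends to $0$ as $n\to\infty$. Markov's inequality then gives, for every $\epsilon>0$,
\begin{equation}
  P(|\estlow_n(X)-\lowtheta|>\epsilon)\le\frac{1}{\epsilon}E\left(\sup_{t\in\mathcal{T}'}\left|\frac{1}{n}\sum_{i=1}^n(\est(X_i,t)-E(\est(X_i,t)))\right|\right)\xrightarrow{n\to\infty}0,
\end{equation}
which is exactly consistency in the sense of \cref{def:consistent}.

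The only real subtlety is the first step: one must be sure that the infimum of $\theta$ over the countable dense set $\mathcal{T}'$ agrees with its minimum over $\mathcal{T}$. This is not automatic from the assumptions stated so far — the paper only guarantees the analogous identity for $\est(x,\cdot)$ — so if needed I would either add continuity of $\theta$ as an explicit hypothesis, or argue from the Glivenko-Cantelli property itself (which forces $\theta$ to be a uniform limit of continuous-in-$t$ sample averages on $\mathcal{T}'$, hence extends continuously to $\mathcal{T}$). Beyond that, everything is a routine combination of the infimum inequality, the GC hypothesis, and Markov.
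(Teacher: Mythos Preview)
Your proposal is correct and follows essentially the same route as the paper: bound $|\estlow_n(X)-\lowtheta|$ by the uniform deviation $\sup_{t\in\mathcal{T}'}|\est_n(X,t)-\theta(t)|$ via the elementary infimum inequality, then apply Markov and the Glivenko--Cantelli hypothesis. You are actually more careful than the paper about the identity $\inf_{t\in\mathcal{T}'}\theta(t)=\lowtheta$, which the paper uses without comment.
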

\begin{proof}
  Fix any $\epsilon>0$. Then, by Markov's inequality,
  \begin{align}
    P(|\estlow_n(X)-\lowtheta|>\epsilon)
    &\le
    \frac{E(|\estlow_n(X)-\lowtheta|)}{\epsilon}
  \end{align}
  so it suffices to show that $E(|\estlow_n(X)-\lowtheta|)$ converges to zero.
  Indeed,
  \begin{align}
    E(|\estlow_n(X)-\lowtheta|)
    &=
    E\left(\left|\inf_{t\in\mathcal{T}'}\est_n(X,t)-\inf_{t\in\mathcal{T}'}\theta(t)\right|\right)
    \\
    &\le
    E\left(\sup_{t\in\mathcal{T}'}\left|\est_n(X,t)-\theta(t)\right|\right)
  \end{align}
  By the Glivenko-Cantelli class assumption, the right hand side converges to zero.
\end{proof}

\subsection{Discrepancy Bounds}
\label{sec:discrepancy:bounds}

For practical reasons, it is useful to theoretically quantify the bias,
in order to gain some intuition
for how we should design a set of estimators that
can achieve a small bias.
A range
of powerful techniques for evaluating so-called discrepancy
bounds is presented in \cite[Chapter~9]{2014:talagrand},
and these can be readily applied to our problem.
However, by the central limit theorem, $\est_n(X,t)$ will approximate a
Gaussian process. Consequently,
the theory for evaluating the supremum of a Gaussian process
\cite[Chapter~2]{2014:talagrand}
is applicable here too, and as discrepancy bounds for Gaussian processes
are much easier to analyse, we will present and apply the key results just
for the Gaussian process case here, referring the reader to the literature
\cite{2014:talagrand} for further (and a lot more technical) treatment of
this fascinating theoretical problem.

For convenience, we introduce:
\begin{equation}
  Z_n(t)\coloneqq \est_n(X,t)-\theta(t)
\end{equation}
We define a pseudo-metric on $\mathcal{T}$ as follows:
\begin{equation}
  d_n(s,t)\coloneqq \sqrt{E\left((Z_n(s)-Z_n(t))^2\right)}
\end{equation}
For any $A\subseteq\mathcal{T}$, let $\Delta_n(A)\coloneqq\sup_{s,t\in A}d_n(s,t)$
denote the diameter of $A$.

We assume that the process $Z_n(t)$ satisfies the following increment condition
\cite[p.~13]{2014:talagrand}:
\begin{equation}\label{eq:increment:condition}
  \forall u>0,\quad
  P(|Z_n(s)-Z_n(t)|\ge u)\le 2\exp\left(-\frac{u^2}{2d_n(s,t)^2}\right).
\end{equation}
This holds if $\est_n(X,t)$ is a Gaussian process.
If \cref{eq:increment:condition} holds, we will say that
$\est_n(X,t)$ is \emph{sub-Gaussian}.
\begin{definition}\cite[p.~25]{2014:talagrand}
  An \emph{admissible sequence} is an increasing sequence $\mathcal{A}_k$
  of partitions of $\mathcal{T}'$ such that the cardinality of $\mathcal{A}_k$
  is $1$ for $k=0$, and less or equal than $2^{(2^k)}$ for $k\ge 1$.
\end{definition}
For any $t\in\mathcal{T}'$, $A_k(t)$ denotes the unique element
of $\mathcal{A}_k$ containing $t$.
\begin{definition}\cite[p.~25]{2014:talagrand}
  For any $\alpha>0$, define the \emph{Talagrand functional} as:
  \begin{equation}
    \gamma_{\alpha}(\mathcal{T}',d_n)\coloneqq\inf_{\mathcal{A}_k}\sup_{t\in \mathcal{T}'}\sum_{k=0}^\infty 2^{k/\alpha}\Delta_n(A_k(t))
  \end{equation}
  where the infimum is taken over all admissible sequences.
\end{definition}
The Talagrand functional is not necessarily finite.

By $\sigma_n$, we denote the minimal standard deviation of $Z_n(t)$, i.e.
\begin{equation}
  \sigma_n^2\coloneqq\inf_{t\in\mathcal{T}'}\var(Z_n(t))
\end{equation}
We can now prove our second main result:
\begin{theorem}\label{thm:bound:bias}
  Assume $\estlow_n(X)\coloneqq\min_{t\in\mathcal{T}}\frac{1}{n}\sum_{i=1}^n \est(X_i,t)$.
  There is a constant $L>0$ such that,
  if $\est_n(X,t)$ is sub-Gaussian,
  then, for all $u>0$,
  \begin{equation}\label{eq:bound:bias}
    P\left(
      |\estlow_n(X)-\lowtheta|> u(\sigma_1+\gamma_2(\mathcal{T}',d_1))
    \right)
    \le L\exp(-\tfrac{nu^2}{2})
  \end{equation}
  and
  \begin{equation}\label{eq:bound:bias2}
    E\left(|\estlow_n(X)-\lowtheta|\right)
    \le L\frac{\sigma_1+\gamma_2(\mathcal{T}',d_1)}{\sqrt{n}}.
  \end{equation}
\end{theorem}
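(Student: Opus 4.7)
The plan is to reduce the bias of $\estlow_n(X)$ to a supremum of the centered process $Z_n(t)=\est_n(X,t)-\theta(t)$ and then invoke Talagrand's generic chaining bound for sub-Gaussian processes. The starting observation is that the minimum is $1$-Lipschitz: since both $\estlow_n(X)=\inf_{t\in\mathcal{T}'}\est_n(X,t)$ and $\lowtheta=\inf_{t\in\mathcal{T}'}\theta(t)$ (using the separability of $\mathcal{T}$ noted in the preamble), we have
\begin{equation*}
  |\estlow_n(X)-\lowtheta|\le \sup_{t\in\mathcal{T}'}|Z_n(t)|.
\end{equation*}
Fix any reference point $t_0\in\mathcal{T}'$ and write $\sup_{t}|Z_n(t)|\le |Z_n(t_0)|+\sup_{t}|Z_n(t)-Z_n(t_0)|$, so the task splits into a one-dimensional tail for $Z_n(t_0)$ and a chaining bound for the increments.

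Next I would exploit the sample-mean structure of $\est_n$. Because the $X_i$ are i.i.d., $Z_n(t)=\frac{1}{n}\sum_{i=1}^n(\est(X_i,t)-\theta(t))$, which gives the key scaling identities $d_n(s,t)=d_1(s,t)/\sqrt{n}$, $\sigma_n=\sigma_1/\sqrt{n}$, and hence $\gamma_2(\mathcal{T}',d_n)=\gamma_2(\mathcal{T}',d_1)/\sqrt{n}$. The sub-Gaussian increment condition (\ref{eq:increment:condition}) applied to $s=t$, $t=t_0$ (or rather a one-point argument using $\var(Z_n(t_0))\le\sigma_n^2$, after possibly replacing $t_0$ by a minimising sequence) gives $P(|Z_n(t_0)|\ge u\sigma_n)\le 2\exp(-u^2/2)$.

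For the increment supremum I would apply Talagrand's generic chaining tail bound \cite[Thm.~2.2.27]{2014:talagrand} to the sub-Gaussian process $Z_n$: there exists a universal constant $L$ with
\begin{equation*}
  P\left(\sup_{t\in\mathcal{T}'}|Z_n(t)-Z_n(t_0)|\ge L\bigl(\gamma_2(\mathcal{T}',d_n)+u\Delta_n(\mathcal{T}')\bigr)\right)\le L\exp(-u^2).
\end{equation*}
Since $\Delta_n(\mathcal{T}')\le \gamma_2(\mathcal{T}',d_n)$ up to a constant (it is trivially bounded by the diameter term inside the Talagrand functional), combining this with the bound on $|Z_n(t_0)|$, applying the $1/\sqrt{n}$ scaling, and rescaling $u$ yields
\begin{equation*}
  P\bigl(|\estlow_n(X)-\lowtheta|> u(\sigma_1+\gamma_2(\mathcal{T}',d_1))\bigr)\le L\exp(-nu^2/2)
\end{equation*}
after absorbing constants. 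The expectation bound (\ref{eq:bound:bias2}) then follows by integrating the tail: $E|\estlow_n(X)-\lowtheta|=\int_0^\infty P(|\estlow_n(X)-\lowtheta|>v)\,dv$, substituting $v=u(\sigma_1+\gamma_2(\mathcal{T}',d_1))$, and evaluating the resulting Gaussian integral to produce the $1/\sqrt{n}$ factor.

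The main obstacle is the careful invocation of Talagrand's theorem: matching the formulation in \cite{2014:talagrand} (which is typically stated for increments relative to an admissible sequence, with $\Delta_n(\mathcal{T}')$ appearing explicitly) to the clean form (\ref{eq:bound:bias}) requires verifying that $\sigma_n+\gamma_2(\mathcal{T}',d_n)$ is the correct single scale absorbing both the diameter term and the one-point tail, and tracking that the universal constant $L$ can be chosen uniformly in $n$. The sample-mean scaling is what makes this clean, since every geometric quantity attached to the process $Z_n$ inherits the same $1/\sqrt{n}$ factor from the central limit scaling.
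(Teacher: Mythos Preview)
Your proposal is correct and follows essentially the same route as the paper: bound $|\estlow_n(X)-\lowtheta|$ by $\sup_{t\in\mathcal{T}'}|Z_n(t)|$, split off a single-point term $|Z_n(t_0)|$ at the variance-minimising $t_0$ (the paper does this explicitly via continuity rather than a minimising sequence), apply Talagrand's sub-Gaussian chaining tail to the increments, use the $d_n=d_1/\sqrt{n}$ scaling, and integrate the tail for the expectation bound. The only cosmetic difference is that you cite the deviation form of Talagrand's bound (with the explicit $\Delta_n(\mathcal{T}')$ term, then absorbed into $\gamma_2$) whereas the paper invokes the cleaner form \cite[Eq.~(2.31)]{2014:talagrand} directly; also note that the increment condition with $s=t$ is vacuous, so the one-point tail for $Z_n(t_0)$ really requires that $Z_n(t_0)$ itself be sub-Gaussian with parameter $\sigma_n$, which the paper simply asserts via the Gaussian tail bound.
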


Note that the constant $L$ in the above theorem is universal,
and bounds for it can be computed directly from Talagrand's proof
\cite[see bounds preceding Eq.~(2.31)]{2014:talagrand}.

\begin{proof}
  Because of our continuity assumptions,
  there is an $s\in\mathcal{T}$ such that
  \begin{equation}
    \sigma_n^2=\var(Z_n(s))
  \end{equation}
  Without loss of generality, we can assume that $s\in\mathcal{T}'$.
  If not, just add $s$ to $\mathcal{T}'$.

  Because $\est_n(X,t)$ is sub-Gaussian,
  we have the following bound for some constant $L'>0$
  \cite[see Eq.~(2.31) and use $S\le\gamma_2(\mathcal{T}',d_n)$]{2014:talagrand}:
  \begin{equation}\label{eq:talagrand:exponential:bound}
    P\left(\sup_{t\in\mathcal{T}'}|Z_n(t)-Z_n(s)| > u\gamma_2(\mathcal{T}',d_n)\right)
    \le L'\exp(-\tfrac{u^2}{2})
  \end{equation}
  Consequently,
  \begin{align}
    \label{eq:talagrand:proof:1}
    &P(|\estlow_n(X)-\lowtheta| > u(\sigma_n+\gamma_2(\mathcal{T}',d_n)))
    \\
    &\le
    P\left(\sup_{t\in\mathcal{T}'}|Z_n(t)| > u(\sigma_n+\gamma_2(\mathcal{T}',d_n))\right)
    \\
    &\le
    P\left(|Z_n(s)|+\sup_{t\in\mathcal{T}'}|Z_n(t)-Z_n(s)| > u(\sigma_n+\gamma_2(\mathcal{T}',d_n))\right)
    \\
\intertext{and now using $\{A+B>0\}\subseteq\{A>0\}\cup\{B>0\}$ for appropriate choice of $A$ and $B$,}
    &\le
    P(|Z_n(s)| > u\sigma_n)+P\left(\sup_{t\in\mathcal{T}'}|Z_n(t)-Z_n(s)| > u\gamma_2(\mathcal{T}',d_n)\right)
    \\
    &
    \label{eq:talagrand:proof:2}
    \le
    2\exp(-\tfrac{u^2}{2})
    +
    L'\exp(-\tfrac{u^2}{2})
    \intertext{where we used the Gaussian tail bound,
      $P(|Z|>z)\le 2 e^{-z^2/2}$ for standard Gaussian $Z$,
      and also \cref{eq:talagrand:exponential:bound}; now
      choose $L\coloneqq\sqrt{\frac{\pi}{2}}(L'+2)$ to arrive at}
    &\le
    L\exp(-\tfrac{u^2}{2})
  \end{align}

  Finally, use $\sigma_n=\frac{1}{\sqrt{n}}\sigma_1$
  and $\gamma_2(\mathcal{T}',d_n)=\frac{1}{\sqrt{n}}\gamma_2(\mathcal{T}',d_1)$
  to arrive at \cref{eq:bound:bias}.
  The $\sigma_n$ equality follows from the usual properties of the
  variance of a sum of i.i.d.\ variables.
  The Talagrand functional equality follows if we can show that
  $d_n(s,t)=\frac{1}{\sqrt{n}}d_1(s,t)$.

  Indeed, first, let $W_i(t)\coloneqq \est(X_i,t)-\theta(t)$, and note that
  \begin{align}
    n^2 (Z_n(s)-Z_n(t))^2
    &=\left(\sum_{i=1}^n W_i(s)-W_i(t)\right)^2
    \\
    &=\sum_{i=1}^n\sum_{j=1}^n (W_i(s)-W_i(t))(W_j(s)-W_j(t))
    \\
    &=\sum_{i=1}^n\left(W_i(s)-W_i(t)\right)^2
    \\
    &\quad
      +2\sum_{i=2}^n\sum_{j=1}^{i-1} (W_i(s)-W_i(t))(W_j(s)-W_j(t))
  \end{align}
  After taking expectations on both sides,
  \begin{align}
    n^2 E\left((Z_n(s)-Z_n(t))^2\right)
    &=\sum_{i=1}^n E\left(\left(W_i(s)-W_i(t)\right)^2\right)
    \\
    &\quad
      +2\sum_{i=2}^n\sum_{j=1}^{i-1} E(W_i(s)-W_i(t))E(W_j(s)-W_j(t))
    \\
    &=n E\left(\left(W_1(s)-W_1(t)\right)^2\right)
    \\
    &=n E\left((Z_1(s)-Z_1(t))^2\right)
\end{align}
where we used the fact that $W_i(s)-W_i(t)$ and $W_j(s)-W_j(t)$ are
independent for $i\neq j$, and that $(W_i(s)-W_i(t))_{i=1}^n$ are
i.i.d.\ and have expectation zero.
Using the definition of $d_n(s,t)$, we conclude that
\begin{equation}
  d_n(s,t)=\frac{1}{\sqrt{n}}d_1(s,t)
\end{equation}
as desired.

To see why the inequality in \cref{eq:bound:bias2} holds, observe that
for any non-negative random variable $V$
\begin{equation}
  P(V>\alpha u)\le \beta\exp(-\tfrac{u^2}{2})
\end{equation}
implies
\begin{align}
  E(V)
  =\int_0^{\infty}P(V>\alpha u)\alpha\,\mathrm{d}u
  \le\alpha\beta\int_0^{\infty}\exp(-\tfrac{u^2}{2})\,\mathrm{d}u
  =\alpha\beta\sqrt{\frac{\pi}{2}}.
\end{align}
Now choose $V$ and $\alpha$ as in \cref{eq:talagrand:proof:1},
choose $\beta=2+L'$,
and apply the inequality established in \cref{eq:talagrand:proof:2}.
\end{proof}

We immediately conclude:

\begin{theorem}
  Assume $\estlow_n(X)\coloneqq\min_{t\in\mathcal{T}}\frac{1}{n}\sum_{i=1}^n \est(X_i,t)$.
  If $\est_n(X,t)$ is sub-Gaussian,
  then $\estlow_n(X)$
  is a consistent estimator for $\lowtheta$
  whenever the minimal standard deviation $\sigma_1$
  and the Talagrand functional $\gamma_2(\mathcal{T}',d_1)$
  are finite.
\end{theorem}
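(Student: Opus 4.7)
The plan is to derive this as an immediate corollary of \cref{thm:bound:bias}, which has already done all of the heavy lifting. Under the hypotheses of the theorem, $\sigma_1$ and $\gamma_2(\mathcal{T}',d_1)$ are both finite, so the quantity $C \coloneqq \sigma_1 + \gamma_2(\mathcal{T}',d_1)$ is a finite positive constant.

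Fix an arbitrary $\epsilon > 0$. To match the form of \cref{eq:bound:bias}, I would set $u \coloneqq \epsilon / C$, which is a strictly positive real number. Substituting into \cref{eq:bound:bias} then yields
\begin{equation}
P\left(|\estlow_n(X) - \lowtheta| > \epsilon\right)
= P\left(|\estlow_n(X) - \lowtheta| > u C\right)
\le L \exp\!\left(-\tfrac{n \epsilon^2}{2 C^2}\right).
\end{equation}
Since $\epsilon$ and $C$ do not depend on $n$, the right-hand side tends to $0$ as $n \to \infty$, which is exactly the condition in \cref{def:consistent}. An equally clean alternative would be to invoke Markov's inequality together with \cref{eq:bound:bias2}, giving $P(|\estlow_n(X) - \lowtheta|>\epsilon) \le L C/(\epsilon\sqrt{n}) \to 0$.

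There is no real obstacle to overcome here; the only thing to verify is that one is allowed to substitute an $n$-independent value of $u$ into the tail bound, which is immediate because the inequality in \cref{eq:bound:bias} holds for all $u>0$. All of the substantive work, in particular the reduction from the $n$-dependent pseudo-metric $d_n$ to the fixed pseudo-metric $d_1$ via the scaling $d_n = n^{-1/2} d_1$, has already been carried out in the proof of \cref{thm:bound:bias}, so the consistency statement follows with essentially no further argument.
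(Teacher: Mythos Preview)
Your proposal is correct and matches the paper's approach: the paper presents this theorem with no proof at all, prefacing it only with ``We immediately conclude,'' so treating it as a direct corollary of \cref{thm:bound:bias} via either the tail bound \cref{eq:bound:bias} or Markov's inequality with \cref{eq:bound:bias2} is exactly what is intended. The one tiny nit is that you assert $C>0$ without justification; in the degenerate case $C=0$ the substitution $u=\epsilon/C$ fails, but then \cref{eq:bound:bias2} gives $E(|\estlow_n(X)-\lowtheta|)=0$ and consistency is trivial.
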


To establish whether or not $\gamma_2(\mathcal{T}',d_1)$ is finite,
a range of practical lower and upper bounds
are provided in \cite[Section~2.3 et. seq.]{2014:talagrand}.
On a very basic intuitive level, we want
\begin{align}
  d_1(s,t)^2
  &=E\left(
    \left(
      \est(X,t)-\theta(t)-\est(X,s)+\theta(s)
    \right)^2
  \right)
  \\
  &=
  \var(\est(X,s))+\var(\est(X,t))-2\cov(\est(X,s),\est(X,t))
\end{align}
to be `small'. This happens precisely when the estimators
$\est(X,s)$ and $\est(X,t)$ are highly correlated for all $s$ and $t$.
A simple but important practical example
where $d_1$ is `too large' is given next:
\begin{theorem}
  There is a constant $M>0$ such that,
  if for some $\epsilon>0$ we have that $d_1(s,t)\ge\epsilon$ for all
  $s\neq t$, then 
  \begin{equation}
    \gamma_2(\mathcal{T}',d_1)\ge M\epsilon\sqrt{\log m}
  \end{equation}
  where $m$ is the cardinality of $\mathcal{T}'$. 
\end{theorem}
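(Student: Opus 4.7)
The plan is to prove the theorem by a direct pigeonhole argument on admissible sequences. The key observation is that the cardinality bound $|\mathcal{A}_k|\le 2^{2^k}$ grows very slowly in $k$, so at small enough $k$ the partition $\mathcal{A}_k$ cannot separate all $m$ points of $\mathcal{T}'$ into singletons; any element failing to separate must then have $d_1$-diameter at least $\epsilon$ by hypothesis, forcing at least one term of the Talagrand sum to be large.

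Concretely, I would fix an arbitrary admissible sequence $(\mathcal{A}_k)_{k\ge 0}$ and let $k^*$ denote the largest nonnegative integer satisfying $2^{2^{k^*}}<m$. Since $|\mathcal{A}_{k^*}|\le 2^{2^{k^*}}<m=|\mathcal{T}'|$ and $\mathcal{A}_{k^*}$ partitions $\mathcal{T}'$, the pigeonhole principle yields some $A\in\mathcal{A}_{k^*}$ containing at least two distinct points of $\mathcal{T}'$; picking any such $t_0\in A$, the hypothesis on $d_1$ gives $\Delta_1(A_{k^*}(t_0))\ge\epsilon$. Hence, because every term in the Talagrand sum is nonnegative,
\[
\sup_{t\in\mathcal{T}'}\sum_{k=0}^\infty 2^{k/2}\Delta_1(A_k(t)) \;\ge\; 2^{k^*/2}\Delta_1(A_{k^*}(t_0)) \;\ge\; 2^{k^*/2}\epsilon.
\]
By maximality of $k^*$ we have $2^{2^{k^*+1}}\ge m$, so $2^{k^*+1}\ge \log_2 m$ and therefore $2^{k^*/2}\ge\sqrt{(\log_2 m)/2}=\sqrt{\log m}/\sqrt{2\log 2}$. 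Taking the infimum over admissible sequences then yields the stated bound with $M=1/\sqrt{2\log 2}$.

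The main (essentially bookkeeping) obstacle is handling very small $m$: for $m=2$ the strict inequality $2^{2^{k^*}}<m$ already fails at $k^*=0$, but one can still work at level $0$ since $|\mathcal{A}_0|=1<m$ by the definition of admissibility, which yields $\gamma_2(\mathcal{T}',d_1)\ge\epsilon$; this is consistent with the bound provided $M$ is chosen small enough, so a single universal $M>0$ handles all $m\ge 2$ uniformly. Everything else is a one-line pigeonhole step, so I do not expect any genuine analytical difficulty.
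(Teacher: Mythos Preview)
Your argument is correct and takes a genuinely different route from the paper. The paper's proof is a two-line citation: it invokes the (upper-bound direction of the) majorizing measure theorem, $E\sup_t X_t \le C\gamma_2(\mathcal{T}',d_1)$ for the Gaussian process with canonical metric $d_1$, and then Sudakov minoration, $E\sup_t X_t \ge c\epsilon\sqrt{\log m}$ under the separation hypothesis, to conclude $\gamma_2 \ge (c/C)\epsilon\sqrt{\log m}$. So the paper passes through Gaussian process theory as an intermediary, relying on two nontrivial black boxes.

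Your pigeonhole argument, by contrast, works directly from the definition of the Talagrand functional and never touches a stochastic process. This is more elementary, fully self-contained, and even yields an explicit constant $M=1/\sqrt{2\log 2}$ (and your bookkeeping for $m=2$ is fine, since that constant also satisfies $M\le 1/\sqrt{\log 2}$). It also makes transparent why the result is purely metric and has nothing to do with Gaussianity. What the paper's route buys is brevity for a reader already fluent with Talagrand's book, and it situates the bound within the broader majorizing-measure framework; what your route buys is that the reader need not open that book at all.
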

\begin{proof}
  Immediate from \cite[Theorem~2.4.1]{2014:talagrand} (the majorizing
  measure theorem) and \cite[Lemma~2.4.1]{2014:talagrand} (Sudakov
  minoration).
\end{proof}
In particular, if $\mathcal{T}'$ is not finite, then $\gamma_2(\mathcal{T}',d_1)=\infty$
under the conditions of the theorem.
The condition $d_1(s,t)\ge\epsilon$ for all $s\neq t$
obtains for instance when all
$\est(X_1,t)$ are pairwise independent
and $\min_{t\in\mathcal{T}}\var(\est(X_1,t))>0$.
Such an estimator will perform very badly.
For example, this tells us that
when doing two-level Monte Carlo,
one should fix the random seed for every run,
in order to ensure that the different runs are maximally correlated
(and definitely not independent!).

\subsection{Confidence Interval}
\label{sec:confidence}

In cases where we can bound the Talagrand functional,
\cref{eq:bound:bias} in \cref{thm:bound:bias} can be used directly
to construct a confidence interval for $\lowtheta$ around $\estlow_n(X)$.
In general, however, bounding the Talagrand functional is a non-obvious
procedure.
The next theorem provides a much simpler procedure
for constructing a confidence interval for $\lowtheta$,
using i.i.d.\ realisations
of $\estlow(X)$ and $\estupp(X,X')$ instead of using the Talagrand functional.
The price we pay is that we need to repeat our calculation
for a sufficient number of realisations of $X$.

Note a crucial difference in notation from \cref{sec:consistency}:
there $X$ was a vector of random variables $(X_1,X_2,\dots)$.
In the theorem below, we need to work with independent realisations of $X$,
where $X$ might be a vector as before, or something else.
Either way,
to avoid possible confusion with the components of $X$, we will denote
these independent realisations by $\chi_1$, $\chi_2$, \dots, $\chi_N$,
and so on.

To apply the central limit theorem, we assume below that $\est(X,t)$ is
uniformly bounded, but obviously this can be relaxed in the usual ways
\cite{1987:nelson}.

\begin{theorem}\label{thm:confint}
  Let $\chi_1$, \dots, $\chi_N$, $\chi'_1$, \dots, $\chi'_N$ be a sequence of
  i.i.d.\ realisations of $X$. Define
  \begin{align}
    Y_*&\coloneqq (\estlow(\chi_i))_{i=1}^N \\
    Y^*&\coloneqq (\estupp(\chi_i,\chi_i'))_{i=1}^N
  \end{align}
  Let $\bar{Y}_*$ and $\bar{Y}^*$ be the sample means of these sequences,
  and let $S_*$ and $S^*$ be their sample standard deviations.
  Let $t_{N-1}$ denote the usual two-sided critical value of the t-distribution
  with $N-1$ degrees of freedom at confidence level $1-\alpha$.
  Then, provided
  that $\sup_{x,t}|\est(x,t)|<+\infty$,
  \begin{equation}
    P\left(
    \bar{Y}_*-t_{N-1}\frac{S_*}{\sqrt{N}}
    \le\lowtheta\le
    \bar{Y}^*+t_{N-1}\frac{S^*}{\sqrt{N}}
    \right)
    \ge
    1-\alpha.
  \end{equation}
  In other words,
  $[\bar{Y}_*-t_{N-1}\frac{S_*}{\sqrt{N}},
    \bar{Y}^*+t_{N-1}\frac{S^*}{\sqrt{N}}]$
  is an approximate confidence interval for $\lowtheta$
  with confidence level (at least) $1-\alpha$.
\end{theorem}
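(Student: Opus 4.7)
The plan is to reduce the statement to a standard two one-sided $t$-interval argument combined with a union bound, using the bias inequalities from \cref{thm:lowuppest}.

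First I would set $\mu_*\coloneqq E(\estlow(X))$ and $\mu^*\coloneqq E(\estupp(X,X'))$. By \cref{thm:lowuppest} we already have $\mu_*\le\lowtheta\le\mu^*$, so it suffices to bracket $\mu_*$ from below by $\bar Y_*-t_{N-1}S_*/\sqrt N$ and $\mu^*$ from above by $\bar Y^*+t_{N-1}S^*/\sqrt N$, each with error probability at most $\alpha/2$. The uniform boundedness assumption $\sup_{x,t}|\est(x,t)|<\infty$ guarantees that $\estlow(\chi_i)$ and $\estupp(\chi_i,\chi_i')$ have finite variance (and in fact bounded support), so the classical CLT applies to both i.i.d.\ sequences, and by the usual Slutsky-type argument the $t$-statistics $(\bar Y_*-\mu_*)\sqrt N/S_*$ and $(\bar Y^*-\mu^*)\sqrt N/S^*$ are asymptotically standard normal and hence asymptotically dominated by $t_{N-1}$.

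From this I would extract the two one-sided inequalities
\begin{align}
P\!\left(\bar Y_*-t_{N-1}\tfrac{S_*}{\sqrt N}>\mu_*\right)&\le \alpha/2,\\
P\!\left(\bar Y^*+t_{N-1}\tfrac{S^*}{\sqrt N}<\mu^*\right)&\le \alpha/2,
\end{align}
using that $t_{N-1}$ is the two-sided critical value at level $1-\alpha$, i.e.\ the $(1-\alpha/2)$-quantile. A union bound then yields that, with probability at least $1-\alpha$, both $\bar Y_*-t_{N-1}S_*/\sqrt N\le\mu_*$ and $\bar Y^*+t_{N-1}S^*/\sqrt N\ge\mu^*$ hold simultaneously. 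Chaining these with $\mu_*\le\lowtheta\le\mu^*$ from \cref{thm:lowuppest} gives the claimed containment of $\lowtheta$ in the stated interval.

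The only real subtlety, and where I would be most careful, is the justification of the $t$-interval: since $\estlow(X)$ is not literally Gaussian, the $t$-interval is only asymptotically valid, which is why the theorem is phrased as an \emph{approximate} confidence interval; I would point out explicitly that uniform boundedness is the easiest sufficient condition (referenced via \cite{1987:nelson}) and note that it could be relaxed to any moment condition adequate for the CLT. No further difficulty arises, since the independence of the pairs $(\chi_i,\chi_i')$ across $i$ is built into the construction and ensures the i.i.d.\ hypotheses of the CLT for both sequences.
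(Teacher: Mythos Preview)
Your proposal is correct and follows essentially the same route as the paper: apply the CLT to obtain one-sided $t$-bounds for $\mu_*$ and $\mu^*$ with error $\alpha/2$ each, then combine via the bias inequalities of \cref{thm:lowuppest}. The only cosmetic difference is that the paper observes the two failure events are disjoint (since $\bar Y_*\le\bar Y^*$ pointwise by \cref{eq:estlowupp:ineq}) and so adds their probabilities exactly, whereas you use a plain union bound; either yields the same inequality $\ge 1-\alpha$, so your version is in fact slightly leaner.
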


Before we proceed with the proof,
it is worth to make a note about computational efficiency.
The slowest part of the computation normally is the optimisation
over $t$, i.e. the evaluation of $\tau$.
To find the confidence interval, we need to run $2N$
evaluations of $\tau$: one for each $\tau(\chi_i)$ in $Y_*$
and one for each $\tau(\chi'_i)$ in $Y^*$.
However, instead of using
$\estupp(\chi_i,\chi'_i)=\est(\chi_i,\tau(\chi'_i))$,
it will be much faster to use
$\estupp(\chi'_i,\chi_i)=\est(\chi'_i,\tau(\chi_i))$, because
for the latter we can recycle the already computed value of $\tau(\chi_i)$.
This still produces a valid confidence interval
for $\estupp(X,X')$.
However,
because it may happen that $\estupp(x',x)<\estlow(x)$ for some realisations
of $x$ and $x'$,
it is not guaranteed
that $\bar{Y}_*\le\bar{Y}^*$ with this modification.
Consequently, the resulting confidence interval might be empty.
However, the proof below only relies on the probability of
\begin{equation}
    \left\{
    \bar{Y}_*-t_{N-1}\frac{S_*}{\sqrt{N}}
    >\lowtheta
    \right\}
    \cap
    \left\{
    \lowtheta>
    \bar{Y}^*+t_{N-1}\frac{S^*}{\sqrt{N}}
    \right\}
\end{equation}
being close to zero.
Since this is very likely to be the case in practice, we suggest to use this faster
method, even if it has a minor theoretical flaw,
because it will be about twice as fast.
It should only not be used when you find that your confidence
intervals are regularly empty.

\begin{proof}
  By \cref{thm:lowuppest}, and the central limit theorem,
  \begin{align}
    P\left(
    \bar{Y}_*-t_{N-1}\frac{S_*}{\sqrt{N}}
    \le E(\estlow(X))
    \right)
    \simeq
    1-\alpha/2,
    \\
    P\left(
    E(\estupp(X,X'))\le
    \bar{Y}^*+t_{N-1}\frac{S^*}{\sqrt{N}}
    \right)
    \simeq
    1-\alpha/2,
  \end{align}
  where $X'$ is as before an i.i.d.\ realization of $X$.
  So, because $E(\estlow(X))\le\lowtheta\le E(\estupp(X,X'))$,
  \begin{align}
    P\left(
    \bar{Y}_*-t_{N-1}\frac{S_*}{\sqrt{N}}
    \le \lowtheta
    \right)
    \ge
    1-\alpha/2,
    \\
    P\left(
    \lowtheta\le
    \bar{Y}^*+t_{N-1}\frac{S^*}{\sqrt{N}}
    \right)
    \ge
    1-\alpha/2.
  \end{align}
  We also know that
  \begin{multline}
    \left\{
    \bar{Y}_*-t_{N-1}\frac{S_*}{\sqrt{N}}
    >\lowtheta
    \right\}
    \cup
    \left\{
    \lowtheta>
    \bar{Y}^*+t_{N-1}\frac{S^*}{\sqrt{N}}
    \right\}
    \\
    =
    \left\{
    \bar{Y}_*-t_{N-1}\frac{S_*}{\sqrt{N}}
    \le\lowtheta\le
    \bar{Y}^*+t_{N-1}\frac{S^*}{\sqrt{N}}
    \right\}^c
  \end{multline}
  and moreover, the intersection of the sets
  on the left hand side of the above expression must be empty,
  because it is guaranteed that $\bar{Y}_*\le\bar{Y}^*$
  by \cref{eq:estlowupp:ineq}.
  Consequently,
  \begin{multline}
    \alpha/2+\alpha/2
    \ge
    P\left(
    \bar{Y}_*-t_{N-1}\frac{S_*}{\sqrt{N}}
    > \lowtheta
    \right)
    +
    P\left(
    \lowtheta >
    \bar{Y}^*+t_{N-1}\frac{S^*}{\sqrt{N}}
    \right)
    \\
    =
    1-P\left(
    \bar{Y}_*-t_{N-1}\frac{S_*}{\sqrt{N}}
    \le\lowtheta\le
    \bar{Y}^*+t_{N-1}\frac{S^*}{\sqrt{N}}
    \right)
  \end{multline}
\end{proof}

\subsection{Confidence Interval for Biased Estimators}

We briefly consider the case where $\est(X,t)$ is biased.
This will allow us to apply our results also on estimators that
are only asymptotically unbiased. Specifically, let $\est(X,t)$ be any
estimator for $\theta(t)$ such that
\begin{equation}
  \left|E(\est(X,t))-\theta(t)\right|\le \beta,
\end{equation}
for some constant $\beta$ which does not depend on $t$.
We then have the following result,
extending \cref{thm:confint}.

\begin{theorem}\label{thm:confint2}
  Let $\chi_1$, \dots, $\chi_N$, $\chi'_1$, \dots, $\chi'_N$ be a sequence of
  i.i.d.\ realisations of $X$. Define
  \begin{align}
    Y_*&\coloneqq (\estlow(\chi_i))_{i=1}^N \\
    Y^*&\coloneqq (\estupp(\chi_i,\chi_i'))_{i=1}^N
  \end{align}
  Let $\bar{Y}_*$ and $\bar{Y}^*$ be the sample means of these sequences,
  and let $S_*$ and $S^*$ be their sample standard deviations.
  Let $t_{N-1}$ denote the usual two-sided critical value of the t-distribution
  with $N-1$ degrees of freedom at confidence level $1-\alpha$.
  Then, provided
  that $\sup_{x,t}|\est(x,t)|<+\infty$,
  \begin{equation}
    P\left(
    \bar{Y}_*-t_{N-1}\frac{S_*}{\sqrt{N}}-\beta
    \le\lowtheta\le
    \bar{Y}^*+t_{N-1}\frac{S^*}{\sqrt{N}}+\beta
    \right)
    \ge
    1-\alpha.
  \end{equation}
\end{theorem}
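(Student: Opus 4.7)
The plan is to reduce to the unbiased situation by introducing the ``apparent target'' $\theta'(t) \coloneqq E(\est(X,t))$, so that $\est(X,t)$ is by construction unbiased for $\theta'(t)$. Setting $\lowtheta' \coloneqq \min_{t\in\mathcal{T}}\theta'(t)$, \cref{thm:lowuppest} (applied with $\theta$ replaced by $\theta'$, which only requires unbiasedness) immediately yields
\begin{equation}
  E(\estlow(X)) \le \lowtheta' \le E(\estupp(X,X')).
\end{equation}

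Next, I would use the uniform bias hypothesis $|\theta'(t) - \theta(t)| \le \beta$ to control $|\lowtheta' - \lowtheta|$. For any $t\in\mathcal{T}$, $\theta'(t) \le \theta(t) + \beta$, so taking the minimum over $t$ on the left and then on the right gives $\lowtheta' \le \lowtheta + \beta$; swapping the roles of $\theta$ and $\theta'$ gives the reverse inequality. Combining this with the previous display produces the two key one-sided bounds
\begin{equation}
  E(\estlow(X)) - \beta \le \lowtheta \le E(\estupp(X,X')) + \beta.
\end{equation}

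From here the argument is essentially a copy of the proof of \cref{thm:confint}, with every occurrence of $E(\estlow(X))$ and $E(\estupp(X,X'))$ shifted by $\mp\beta$. Under the boundedness assumption $\sup_{x,t}|\est(x,t)|<+\infty$, the central limit theorem applied to the i.i.d.\ sequences $Y_*$ and $Y^*$ gives approximate one-sided $t$-intervals at confidence $1-\alpha/2$ for $E(\estlow(X))$ and $E(\estupp(X,X'))$; composing these with the shifted bounds above yields
\begin{align}
  P\left(\bar{Y}_* - t_{N-1}\tfrac{S_*}{\sqrt{N}} - \beta \le \lowtheta\right) &\ge 1-\alpha/2, \\
  P\left(\lowtheta \le \bar{Y}^* + t_{N-1}\tfrac{S^*}{\sqrt{N}} + \beta\right) &\ge 1-\alpha/2.
\end{align}
The union bound, together with the fact (via \cref{eq:estlowupp:ineq}) that the two bad events cannot occur simultaneously, gives the desired $1-\alpha$ lower bound on the joint event.

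There is no real obstacle here; the theorem is a clean bookkeeping extension of \cref{thm:confint}, and the only new ingredient is the elementary observation that the minimum functional is $1$-Lipschitz with respect to the uniform distance between $\theta$ and $\theta'$, which absorbs the bias $\beta$ symmetrically into each side of the confidence interval.
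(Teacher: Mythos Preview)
Your proposal is correct and follows essentially the same route as the paper: define the ``apparent target'' $\theta'(t)=E(\est(X,t))$ so that $\est(X,t)$ is unbiased for it, use the uniform bias bound to show $|\min_t\theta'(t)-\lowtheta|\le\beta$, and then invoke the unbiased confidence interval result. The only cosmetic difference is that the paper applies \cref{thm:confint} as a black box to the target $\min_tE(\est(X,t))$ and then uses a single set inclusion to absorb the $\pm\beta$ shift, whereas you re-run the one-sided CLT/disjointness argument by hand; both arrive at the same place.
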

\begin{proof}
  We know that, for all $t\in\mathcal{T}$,
  \begin{equation}
    \theta(t)-\beta\le E(\est(X,t))\le \theta(t)+\beta
  \end{equation}
  and therefore
  \begin{equation}
    \lowtheta-\beta\le\min_{t\in\mathcal{T}}E(\est(X,t))\le\lowtheta+\beta
  \end{equation}
  since $\min_{t\in\mathcal{T}}\theta(t)=\lowtheta$.
  Consequently,
  \begin{multline}
    \left\{
    \bar{Y}_*-t_{N-1}\frac{S_*}{\sqrt{N}}-\beta
    \le\lowtheta\le
    \bar{Y}^*+t_{N-1}\frac{S^*}{\sqrt{N}}+\beta
    \right\}
    \\
    \supseteq
    \left\{
    \bar{Y}_*-t_{N-1}\frac{S_*}{\sqrt{N}}
    \le\min_{t\in\mathcal{T}}E(\est(X,t))\le
    \bar{Y}^*+t_{N-1}\frac{S^*}{\sqrt{N}}
    \right\}
  \end{multline}
  Now note that $\est(X,t)$ is an unbiased estimator for $E(\est(X,t))$,
  and apply \cref{thm:confint}.
\end{proof}

\subsection{Coherence of the Lower Estimator}

To end this section on estimators for the minimum of a function,
we study the case where we try to estimate a lower prevision,
and in particular, we analyze under which conditions
the estimate $\estlow(X)$
produces a coherent lower prevision.
This is highly desirable, for instance
if the estimate is consequently used for decision making,
because algorithms and properties of decision rules
typically rely heavily on coherence \cite{2007:troffaes:decision:intro}.
In particular, we would not want $\estlow(X)$ to incur a sure loss.

Consider an arbitrary set $\gambles$ of random quantities.
Let $P_t$ be a linear prevision (i.e. expectation operator)
on $\gambles$, parameterized by $t\in\mathcal{T}$.
For every $f$ in $\gambles$,
the lower prevision (or, lower expectation) of $f$ is defined as
\begin{equation}
  \underline{P}(f)\coloneqq\min_{t\in\mathcal{T}}P_t(f).
\end{equation}
We now assume that we have an estimator for each $f$ in $\gambles$.
Equivalently, we have an estimator defined on
$\est\colon\mathcal{X}\times\mathcal{T}\times\gambles$,
such that for every $f\in\gambles$ and $t\in\mathcal{T}$:
\begin{equation}
  E(\est(X,t,f))=P_t(f)
\end{equation}
so each $\est(X,t,f)$ is an estimator for $P_t(f)$.
Under suitable conditions, we know that
\begin{equation}
  \estlow(X,f)\coloneqq\min_{t\in\mathcal{T}}\est(X,t,f)
\end{equation}
is a consistent estimator for $\underline{P}(f)$.
What we would like additionally, however, is
for any possible realization $x$ of $X$,
$\estlow(x,\cdot)$ (as a function on $\gambles$)
to be a coherent lower prevision as well.
The next theorem settles this question.

\begin{theorem}\label{thm:coherence}
  Let $x$ be any realization of $X$.
  If, for every $t\in\mathcal{T}$,
  $\est(x,t,\cdot)$ is a coherent prevision,
  then $\estlow(x,\cdot)$ is a coherent lower prevision.
\end{theorem}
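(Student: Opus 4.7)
The plan is to invoke the classical envelope characterisation of coherent lower previsions: a functional on a set of gambles is a coherent lower prevision if and only if it is the lower envelope (pointwise infimum) of some non-empty set of coherent (linear) previsions. This is a standard result in the theory of imprecise probabilities, and it reduces the statement to an almost immediate observation.

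Fix a realisation $x$ of $X$. By hypothesis, the family
\begin{equation}
  \mathcal{P}_x\coloneqq\{\est(x,t,\cdot)\colon t\in\mathcal{T}\}
\end{equation}
is a set of coherent previsions on $\gambles$. By definition, $\estlow(x,\cdot)=\min_{t\in\mathcal{T}}\est(x,t,\cdot)$ is the pointwise infimum of $\mathcal{P}_x$ on $\gambles$ (the minimum being attained because of the continuity-in-$t$ and compactness assumptions made at the start of the section, so $\mathcal{P}_x$ is in particular non-empty and the infimum is well-defined gamble-by-gamble). The envelope theorem then directly yields coherence of $\estlow(x,\cdot)$.

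If one prefers a self-contained verification rather than quoting the envelope theorem, the standard route is to check the three defining axioms of a coherent lower prevision directly on $\estlow(x,\cdot)$. Boundedness ($\estlow(x,f)\ge\inf f$) follows because each $\est(x,t,\cdot)$ satisfies it and the inequality is preserved under minimisation. Positive homogeneity follows because $\est(x,t,\lambda f)=\lambda\est(x,t,f)$ for every $t$ and every $\lambda\ge 0$, so taking the minimum over $t$ commutes with multiplication by $\lambda$. Super-additivity, $\estlow(x,f+g)\ge\estlow(x,f)+\estlow(x,g)$, is the only step with any content: for each $t$, linearity of $\est(x,t,\cdot)$ gives $\est(x,t,f+g)=\est(x,t,f)+\est(x,t,g)\ge\estlow(x,f)+\estlow(x,g)$, and taking the minimum over $t$ on the left completes the argument.

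The main (and essentially only) obstacle is conceptual rather than technical: one must recognise that the setup is a special case of the envelope construction, so the result is immediate. No new analytical machinery is needed, and no use of the measurability or Glivenko-Cantelli machinery from earlier subsections is required, since coherence is a property of the deterministic functional $\estlow(x,\cdot)$ for each fixed $x$ separately.
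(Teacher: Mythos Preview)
Your proposal is correct and follows the same approach as the paper, which simply notes that the lower envelope of coherent previsions is always a coherent lower prevision. In fact you go further than the paper does by also sketching a direct verification of the three coherence axioms, which is a welcome addition but not strictly needed.
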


The proof follows from the fact that the lower envelope of coherent
previsions always produces a coherent lower prevision.

The importance of the theorem is that, in order for our estimate
$\estlow(x,\cdot)$ to form a coherent lower prevision,
we need each individual estimator
$\est(x,t,\cdot)$ to be a coherent prevision.

The other importance is a conceptual one:
to maintain coherence,
we use the same realization $x$ of $X$
for all $f$ in $\gambles$.
Coincidently, this will also reduce computational effort,
because we only need to sample once to obtain an estimate
for the entire lower prevision.

\section{Iterated Importance Sampling}
\label{sec:iteratedimpsamp}

In this section, we apply the theory developed in \cref{sec:estimation}
on a specific estimator,
namely the one that is associated with importance sampling.
We follow the treatment presented earlier in \cite{2017:troffaes::impmc}.
The main difference is that in this paper,
we are in a position to provide proper confidence intervals.
We can also ask deeper research questions about the theoretical properties
of importance sampling for estimating lower previsions.

\subsection{Importance Sampling Estimates}
\label{sec:impsam}

We first review the basic ideas behind importance sampling.
For the theory behind the results that are presented here,
we refer to \cite[Chapter~9]{2013:owen::mcbook}.

Consider a parameterized collection
of probability density functions
$p_t$ on $\mathcal{X}$.
Assume we have an i.i.d.\ sample $x_1$, \dots, $x_n$
drawn from a strictly positive probability density function $q$
on $\mathcal{X}$.
For example, we could have that $q=p_{\tilde{t}}$ for some fixed $\tilde{t}\in\mathcal{T}$, but we do not require this.
Throughout the entire paper, we will consider many different probability
density functions, but the sample $x_1$, \dots, $x_n$ will always be one
drawn from $q$.
Assume we have a real-valued function $f\colon\mathcal{X}\to\reals$,
and we would like to
estimate the expectation of $f$ with respect to $p_t$,
for some arbitrary choice of $t\in\mathcal{T}$.

In case $p_t=q$, by the central limit theorem,
an approximate 95\% confidence interval
for the expectation of $f$ with respect to $q$
is then given by $\est(x)\pm 1.96 \estsigma(x)/\sqrt{n}$ where
\begin{align}
\est(x)&\coloneqq\frac{1}{n}\sum_{i=1}^n f(x_i) &
\estvar(x)&\coloneqq\frac{1}{n-1}\sum_{i=1}^n (f(x_i)-\est(x))^2
\end{align}

Can we use the same sample $x_1$, \dots, $x_n$ drawn from $q$ to
get an estimate for the expectation of $f$ with respect to $p_t$
for \emph{any} $t$, when $p_t\neq q$?
The following equality gives a clue
as to how we might do that:
\begin{equation}
  \int f(x)p_t(x)dx=\int \frac{p_t(x)}{q(x)}f(x)q(x)dx=\int w_t(x)f(x)q(x)dx
\end{equation}
where $w_t=p_t/q$.
So, the expectation of $f$ with respect to $p_t$ is the same as
the expectation of
$w_t f$ with respect to $q$, and therefore
an approximate 95\% confidence interval
for the expectation of $f$ with respect to $p_t$
is then given by $\est(x,t)\pm 1.96 \estsigma(x,t)/\sqrt{n}$ where
\begin{align}
\est(x,t)&\coloneqq\frac{1}{n}\sum_{i=1}^n w_t(x_i) f(x_i) \\
\estvar(x,t)&\coloneqq\frac{1}{n-1}\sum_{i=1}^n (w_t(x_i) f(x_i)-\est(x,t))^2
\end{align}
This estimate is called the \emph{importance sampling estimate}.

The estimate $\est(x,t)$, when seen as a function of $f$,
does not produce a coherent prevision, because the weights $w_t(x_i)$
will usually not sum to $n$.
Additionally,
the normalisation constant of the densities is often unknown, or is
slow to compute, and we only know $w'_t=c p_t/q$ for
some unknown value of $c$. We can address both of these issues
by using the
\emph{self-normalised importance sampling estimate}:
\begin{align}
  \label{eq:impsamestimate}
  \est(x,t)&\coloneqq\frac{\sum_{i=1}^n w'_t(x_i) f(x_i)}{\sum_{i=1}^n w'_t(x_i)}
  \\
  \estvar(x,t)&\coloneqq\frac{1}{n-1}
    \frac{%
       \frac{1}{n}\sum_{i=1}^n w'_t(x_i)^2 (f(x_i)-\est(x,t))^2}{%
       \left(\frac{1}{n}\sum_{i=1}^n w'_t(x_i)\right)^2}
\end{align}
This estimator, when seen as a function of $f$, does produce
a coherent prevision, and therefore the associated
lower estimator $\estlow(x)$ will produce a coherent lower prevision
(see \cref{thm:coherence}).

There are some downsides to using the self-normalised importance
sampling estimate.  First of all, unlike the standard importance
sampling estimator, the self-normalised estimator has a bias of order
$O(1/n)$, i.e. it is only asymptotically unbiased.  Although this bias
converges to zero as the sample size $n$ increases, for small sample
sizes, the bias can be substantial, and there is no guarantee that
the bias is uniformly bounded as we vary $t$.  We do note that the
bias becomes zero if the weights are constant, i.e. if the sampling
distribution $q$ is close to the target distribution $p_t$. This will
be important later when we consider the iterative procedure.

Secondly,
because \cref{eq:impsamestimate} does not have the form of a sample mean,
the Glivenko-Cantelli class condition
and results for the sub-Gaussian case cannot be applied here.
Those results only apply to the standard importance sampling estimate.
In particular, if the set of functions
\begin{equation}
  \{w_t(X)f(X)\colon t\in\mathcal{T}'\}
\end{equation}
form a Glivenko-Cantelli class, then the lower envelope of the
standard importance sampling estimators will be consistent.

Although $\estvar(x,t)$ gives an indication of the quality of the
estimate, one must be wary that $\estvar(x,t)$ is by itself only an
approximation of the true error. An additional diagnostic to consider
is the effective sample size, which can be calculated as follows:
\begin{equation}
  n(x,t)\coloneqq\frac{\left(\sum_{i=1}^n w'_t(x_i)\right)^2}{\sum_{i=1}^n w'_t(x_i)^2}
\end{equation}
Note that there are many different ways to define effective sample
size and even more ways to define diagnostics for importance
sampling. What matters for this paper is that a low $n(x,t)$ is bad, and
that $n(x,t)\simeq n$ is good. For an in-depth discussion about
diagnostics for importance sampling, we refer to
\cite[Section~9.3]{2013:owen::mcbook}.

\subsection{Imprecise Importance Sampling Estimates}
\label{sec:impreciseimpsam}

Importance sampling has many different uses, including
variance reduction, numerical integration, and Bayesian inference.
Here, we will use the theory developed in \cref{sec:estimation}
in order to
estimate the lower prevision of a gamble $f$.
O'Neill \cite{2009:oneill:importancesampling} studied this technique already
in a Bayesian setting, although only studying the consistency of
his estimator $\est(X,t)$ and not the bias and consistency of $\estlow(X)$
as we do here.

Given our parameterized collection of
probability density functions $p_t$,
we define the \emph{lower prevision} of $f$ as
\begin{equation}\label{eq:lowprev}
  \underline{P}(f)\coloneqq\min_{t\in\mathcal{T}}\int f(x)p_t(x)\,\mathrm{d}x
\end{equation}
where, as in \cref{sec:estimation},
we assume that the minimum is achievable, for simplicity of presentation.
With, as before,
\begin{equation}
\tau(x)\coloneqq\arg\min_{t\in\mathcal{T}}\est(x,t)
\end{equation}
where $\est(x,t)$ is the importance sampling estimate,
we know from \cref{thm:lowuppest} that
\begin{align}
  \estlow(x)&=\est(x,\tau(x)) & \estupp(x,x')&=\est(x,\tau(x'))
\end{align}
will provide lower and upper estimates for $\underline{P}(f)$.
The key observation here is
that we can use the same $x_1$, \dots, $x_n$, $x_1'$, \dots, $x_n'$
across all choices of $t\in\mathcal{T}$, and that the
optimisation procedure operates on the weights only.
Additionally,
the importance sampling
estimates $\est(X,t)$ will be correlated across different values of $t$.
As discussed in \cref{sec:discrepancy:bounds},
the stronger this correlation is,
the lower will be the value of the Talagrand functional,
and therefore the lower will be the bias of $\estlow(X)$.

If we repeat the estimation $N$ times
(so, we need $N\times 2n$ i.i.d.\ samples from $q$ in total),
we can construct a
confidence interval for $\underline{P}(f)$
using \cref{thm:confint},
provided for instance that $f$ is bounded
so that the central limit theorem applies.

Note that in the self-normalised case,
by \cref{thm:confint2}, the confidence interval as constructed
in \cref{thm:confint} is still approximately correct for large $n$,
because this estimator is still asymptotically unbiased,
provided that the bias is uniformly bounded as we vary $t$.

One issue with this method is that the standard error $\estsigma(x,t)$ can be
very large, especially if $p_t$ is very different from $q$.
So, the method will only work efficiently if
$\estsigma(x,t)$ remains
reasonably bounded across $\mathcal{T}$.
From the literature on importance sampling
for variance reduction, we know that good choices for $q$ are
those that are proportional to $|f|p_t$
\cite[Chapter~9, p.~6]{2013:owen::mcbook}.
So, in case $p_t$ covers a wide range of distributions,
it may be hard to identify a single sampling distribution $q$.
\cite[Section~3]{2016:zhang:importancesampling} discuss ways of
choosing optimal sampling distributions for credal sets.

A second problem is that, in general, there is no single sampling
distribution $q$ that can guarantee a good effective sample size $n(x,t)$ for
all $t\in\mathcal{T}$. Consequently, with this approach, even if we
try to choose $q$ optimally, the effective sample size at $\tau(x)$ can
still become extremely low.

\subsection{Example}
\label{sec:exam}

As a first example, we demonstrate imprecise importance sampling
on the imprecise Dirichlet model,
similar to the one studied in \cite{2009:oneill:importancesampling}.
The starting point of the example is identical to the one studied
in \cite{2017:troffaes::impmc}.
However, the confidence intervals in \cite{2017:troffaes::impmc} presumed,
as stated,
an unbiased case, which was not exactly satisfied.
Here, we provide better confidence intervals based on the method that
we developed in \cref{sec:confidence}.

Denote the $k$-dimensional unit simplex by $\Delta$.
Consider an unknown parameter $x\in\Delta$,
say, modelling the probabilities of some multinomial process.
Consider the following class of probability density functions
on $x$:
\begin{equation}\label{eq:idmdensity}
p_t(x)=\frac{\Gamma(s)}{\prod_{j=1}^k\Gamma(st_j)}\prod_{j=1}^kx_j^{st_j-1}
\end{equation}
with hyperparameters $s>0$ and $t\in\Delta$---these are Dirichlet distributions.
We are interested in finding the lower expectation of some function
$f(x)$, over all $t\in\mathcal{T}\subseteq\Delta$ and with $s=2$ fixed.

For $q(x)$, we take the Dirchlet distribution with uniform $\tilde{t}_j=1/k$ and
with the same value for $\tilde{s}=2$.

In order to apply importance sampling, we need to calculate the weight function.
The weights are:
\begin{equation}
w_t(x)=p_t(x)/q(x)\propto \prod_{j=1}^kx_j^{st_j-\tilde{s}\tilde{t}_j}=w'_t(x)
\end{equation}
In this case, we have a very simple closed analytical expression for
$w'_t(x)$.  Note that we could also use $w_t(x)$ directly, however
evaluating the normalisation constants requires several evaluations of
the Gamma function, and slows down the optimisation procedure
considerably \cite{2017:troffaes::impmc}.
The optimisation problem can be written as
\begin{equation}\label{eq:optimalt}
  \tau(x)=\arg\min_{t\in\mathcal{T}}
  \frac{\sum_{i=1}^n w'_t(x_i) f(x_i)}{\sum_{i=1}^n w'_t(x_i)}
\end{equation}

As a numerical example, we take $k=5$,
$\mathcal{T}=\{t\in\Delta\colon t_j\ge 0.1\}$, and
$f(x)=x_1+2x_2+5x_3+4x_4-3x_5$. In this case, we know that the exact
expectation of $f$, for fixed $t$, is given by
\begin{equation}
  P_t(f)=t_1+2t_2+5t_3+4t_4-3t_5.
\end{equation}
So, the lower prevision of $f$ over all $t\in\mathcal{T}$ is
clearly achieved for
\begin{equation}\label{eq:example1:optimalt}
  t^*\coloneqq(0.1,0.1,0.1,0.1,0.6)
\end{equation}
and is given by
\begin{equation}
  \underline{E}(f)
  =
  0.1+2\times 0.1+5\times 0.1+4\times 0.1 - 3 \times 0.6
  =
  -0.6
\end{equation}

The simulation code was implemented in R. The \texttt{constrOptim} function
was used to do the actual optimisation, through the downhill simplex method.
\Cref{tab:firstexperiment} summarizes our simulation results.
The 95\% confidence bounds are given by the first two rows of the table.
Besides the confidence interval, as a further diagnostic, we also provide
the mean values for $\tau(x)$ and $n(x,\tau(x))$ across all simulation runs,
or more precisely:
\begin{align}
  \bar{\tau}_*&\coloneqq\frac{1}{N}\sum_{i=1}^N \tau(\chi_i) &
  \bar{n}_*&\coloneqq\frac{1}{N}\sum_{i=1}^N n(\chi_i,\tau(\chi_i))
\end{align}
where each $\chi_i$ represents a different realisation
of the vector $(x_1,\dots,x_n)$, as explained in \cref{sec:confidence}.
The results are presented for the modified (faster) method of obtaining
the upper confidence bound; see the discussion following \cref{thm:confint}.
These bounds were checked against the slower exact theoretical bounds:
the numerical results were nearly identical, but the simulation took
twice as long.

\begin{table}
\centering
\begin{tabular}{r|r|r|r|r|r|r}
$N$       & 4 & 8  & 16 & 32 & 64 & 128 \\
$n$       & 4 & 8  & 16 & 32 & 64 & 128 \\
\hline
$\bar{Y}_*-t_{N-1} S_* / \sqrt{N}$ & -0.522 & -0.884 & -0.445 & -0.603 & -0.614 & -0.560 \\
$\bar{Y}^*+t_{N-1} S^* / \sqrt{N}$ & 4.001 & 1.572 & 1.127 & 0.397 & 0.053 & -0.202 \\
\hline
$\bar{\tau}_{*1}$ & 0.382 & 0.100 & 0.165 & 0.149 & 0.109 & 0.110 \\
$\bar{\tau}_{*2}$ & 0.100 & 0.152 & 0.117 & 0.104 & 0.106 & 0.103 \\
$\bar{\tau}_{*3}$ & 0.176 & 0.108 & 0.110 & 0.100 & 0.102 & 0.101 \\
$\bar{\tau}_{*4}$ & 0.146 & 0.133 & 0.107 & 0.101 & 0.102 & 0.100 \\
$\bar{\tau}_{*5}$ & 0.196 & 0.506 & 0.501 & 0.545 & 0.581 & 0.586 \\
\hline
$\bar{n}_*$ & 1.512 & 2.763 & 3.826 & 6.009 & 11.720 & 15.988 \\
\hline
time (seconds) & 0.636 & 2.008 & 7.601 & 27.925 & 115.052 & 472.733
\end{tabular}
\caption{Importance sampling simulation results for various sample sizes.}
\label{tab:firstexperiment}
\end{table}

If $\estlow(X)$ is theoretically consistent, then we would expect
the lower confidence bound (first row)
to approach the theoretically correct value of $-0.6$ as
$n$ and $N$ increase.
This appears to be the case.
The upper confidence bound improves gradually, but is still very far off.
Either way, the confidence bounds are pretty bad:
the correct value $-0.6$ falls near the lower bound in every case,
and falls outside the interval half of the time.
This is likely partly due to the bias in the self-normalised estimator
for small $n$.

The mean effective sample sizes
$\bar{n}_*$ are a very long way from the full sample size $n$.
This means that the estimators will have large variance,
as is evidenced by the large width of the confidence interval.

Despite the bad estimates, the values for $\tau(x)$, as evidenced by
the $\bar{\tau}_*$ row, are quite close to the theoretically optimal
value (see \cref{eq:example1:optimalt}), even if the effective sample size
is still rather low. So, importance sampling manages to identify the
correct distribution. The issue is that the weights are so skewed
that the sample has poor quality.

In terms of computational time,
the bottleneck is clearly the optimisation procedure.
We emphasize that we have not tried to write the fastest possible code,
and there might still be good opportunities for optimisation.

\subsection{Iterated Importance Sampling}

The example shows that a single importance sampling distribution $q$ may
not provide a good effective sample size across the entire set
of distributions $p_t$, even if $n$ is quite large.
For instance, in the numerical example,
with $n=128$ we still only had $\bar{n}_*\simeq 16$.

What we conclude from this is that plain imprecise
importance sampling does not work very well, even in simple cases.
Next we discuss some extensions of the proposed procedure in order
to make it work better.

Even though the estimates are quite bad,
our numerical experimentation shows that the correct $t^*$, or nearly correct
$t^*$, can be identified already with much lower $n$:
already for $n=8$, the correct value is not too far off,
and for $n=64$, is correct within 10\% relative error.
So, rather than increasing $n$ in order to guarantee
a tighter confidence interval,
one idea is to iterate the procedure
so that $q$ eventually converges to $p_{t^*}$ where $t^*$ is the actual
optimal choice. If $q$ is equal to $p_{t^*}$,
then all weights are identical, $n=n(x,t^*)$, and the self-normalised estimator is no longer biased. Also, in this case,
it turns out that the optimisation in \cref{eq:optimalt} runs very quickly,
because we are already near the optimal solution.

Here is how we might implement this in practice:
\begin{enumerate}[nosep]
\item Set $t$ to a reasonable initial value.
\item\label{item:proc1:generateq}
  Fix the random seed,
  and generate a sample $x_1$, \dots, $x_n$ from $p_t$.
\item Find optimal $t^*$ through \cref{eq:optimalt}.
\item Check if $n_{t^*}$ is close to $n$, or until a maximum number of iterations is reached.
  If so, construct a confidence interval using $p_{t^*}$
  as the sampling distribution (see \cref{thm:confint}), and stop.
\item Set $t=t^*$, and return to \cref{item:proc1:generateq}.
\end{enumerate}
This is the same algorithm as the one presented in
\cite{2017:troffaes::impmc,2017:fetz::montecarlo}, with the difference that
we fix the random seed between iterations
(this removes random effects hindering convergence of the algorithm),
and we calculate a proper confidence interval at the end.

\subsection{Example Revisited}
\label{sec:exam2}

Let us apply the proposed iterative procedure on our Dirichlet example.
For simplicity, we choose a fixed value of $n=128$.
This is also one of the entries in the above table,
so it provides a good basis for comparison.
\Cref{tab:iterations} summarises the results of the first phase
(i.e. the iterations before we construct the confidence interval).
\begin{table}
\centering
\begin{tabular}{r|r|r|r}
iteration & 1 & 2 & 3 \\ 
\hline
$n(x,\tau(x))$ & 2.186 & 123.974 & 128.000 \\ 
\hline
$\tau_1(x)$ & 0.100 & 0.100 & 0.100 \\ 
$\tau_2(x)$ & 0.116 & 0.100 & 0.100 \\ 
$\tau_3(x)$ & 0.100 & 0.100 & 0.100 \\ 
$\tau_4(x)$ & 0.100 & 0.100 & 0.100 \\ 
$\tau_5(x)$ & 0.584 & 0.600 & 0.600 \\ 
\hline
time (seconds) & 2.203 & 1.724 & 1.163
\end{tabular}
\caption{Importance sampling simulation results for each iteration.}
\label{tab:iterations}
\end{table}

This part of the simulation took only 5 seconds.
We see that the simulation converges in just 3 steps.  In the first
step, we get fairly close to the correct $t^*$, even though the
effective sample size $n_t\simeq 2$ is completely off the chart.
The second step
uses this value for $t$ to draw samples, and as this distribution is much
closer to the actual optimal distribution, the effective sample size
increases substantially. In this step, we also identify the correct value
for $t^*$. The last step uses the (nearly) correct distribution for sampling,
and gets a full effective sample size.

To produce a confidence interval, we have two choices.
If we do not want to make any assumptions, we simply apply
\cref{thm:confint}. This will be fairly slow, but we will get
a proper confidence interval. The results are given
in the top part of \cref{tab:iterative:confint}.
\begin{table}
\centering
\begin{tabular}{r|r}
  $N$ & 128 \\ 
  $n$ & 128 \\
\hline 
$\bar{Y}_*-t_{N-1} S_* / \sqrt{N}$ & -0.640 \\ 
$\bar{Y}^*+t_{N-1} S^* / \sqrt{N}$ & -0.585 \\ 
$\bar{\tau}_{*1}$ & 0.100 \\ 
$\bar{\tau}_{*2}$ & 0.100 \\ 
$\bar{\tau}_{*3}$ & 0.100 \\ 
$\bar{\tau}_{*4}$ & 0.100 \\ 
$\bar{\tau}_{*5}$ & 0.600 \\ 
$\bar{n}_*$ & 127.946 \\ 
time (seconds) & 156.113 \\
\hline
$\bar{x}-t_{Nn-1} s / \sqrt{Nn}$ & -0.638 \\ 
$\bar{x}+t_{Nn-1} s / \sqrt{Nn}$ & -0.584 \\ 
time (seconds) & 0.076
\end{tabular}
\caption{Confidence intervals for iterative importance sampling. The top part of the table shows the theoretically correct method, and the bottom of the table shows the fast approximate method under the assumption that the optimisation produces stable values for $t$.}
\label{tab:iterative:confint}
\end{table}
Because the sampling distribution is very close
to the theoretically optimal distribution,
the effective sampling size is almost identical to the actual sampling size
in all of the runs. Moreover, the optimisation itself runs much faster.
The total time taken has reduced from 473 seconds to 161 seconds.

However, we are really wasting a lot of time: if $\tau(x)$ remains
pretty much constant, we can produce a direct confidence interval much faster
simply by sampling directly from $p_t$. For comparability with the
theoretically correct confidence interval,
we use the same total sample size, $Nn=16384$.
The results are presented in the bottom part of \cref{tab:iterative:confint}.
We see that this confidence interval is virtually identical to the
theoretically correct interval. Moreover, it takes less than a tenth
of a second to calculate.  So, if we make a leap of faith and assume
that importance sampling simulations will not deviate away further
from the $t$ that was identified in the last step of the algorithm,
calculating the confidence
interval is really quick, and in this specific case we have reduced
473 seconds down to just 5 seconds.
Sensible criteria for determining whether $t$ will remain stable are:
\begin{itemize}
\item Check that $n(x,t)\simeq n$. If not, then the weights are not evenly
  distributed, and therefore the sampling distribution is unlikely going
  to be the optimal distribution.
\item Rerun the imprecise importance sample (from $t$)
  for a few i.i.d.\ realisations
  $\chi_1$, \dots, $\chi_N$ of $X$,
  and check that $\tau(\chi_i)$ does not deviate much from $t$.
\end{itemize}
Both of these conditions are satisfied in our example. Indeed, by
\cref{tab:iterative:confint}, we see that
$\bar{\tau}_*\simeq t^*$ and $\bar{n}_*\simeq n$. However we may need far
fewer than $N=128$ iterations in order to verify these conditions.

\subsection{Entropy Example}

For comparison, we present our method applied on the importance sampling 
example given by O'Neill \cite[Section~8]{2009:oneill:importancesampling}.
This example also considers the imprecise Dirichlet model
as in \cref{eq:idmdensity}, with $k=2$ (i.e. it is an imprecise Beta model),
$s=10$, and $\mathcal{T}=\{t\in\Delta\colon t_1\ge 0.3,\,t_2\ge 0.6\}$
(the example is presented differently in
\cite{2009:oneill:importancesampling}, but it is equivalent to this one
after some manipulations). We are interested in estimating:
\begin{equation}
  f(x)=-\sum_{i=1}^k x_i\ln(x_i)
\end{equation}
The exact lower expectation for this model is achieved at $t^*=(0.3,0.7)$
and is equal to $\frac{3553}{3600}\simeq 0.5639683$
\cite[Section~3]{2003:hutter:estimation}.
\Cref{tab:entropy:iterations} presents the results from our simulation,
for $n=10^3$ samples (which is the smallest number considered in
\cite[Section~8]{2009:oneill:importancesampling}).
\begin{table}
\centering
\begin{tabular}{r|r|r}
iteration & 1 & 2 \\ 
\hline
$n(x,\tau(x))$ & 873.44201 & 1000.00000 \\ 
\hline
$\tau_1(x)$ & 0.30000 & 0.30000 \\ 
$\tau_2(x)$ & 0.70000 & 0.70000 \\
\hline
  time & 1.61215 & 0.50916
\end{tabular}
\caption{Importance sampling simulation results for each iteration.}
\label{tab:entropy:iterations}
\end{table}

\begin{table}
\centering
\begin{tabular}{r|r}
  $N$ & 10 \\ 
  $n$ & 1000 \\
\hline 
$\bar{Y}_*-t_{N-1} S_* / \sqrt{N}$ & 0.56217 \\ 
$\bar{Y}^*+t_{N-1} S^* / \sqrt{N}$ & 0.56624 \\ 
$\bar{\tau}_{*1}$ & 0.30000 \\ 
$\bar{\tau}_{*2}$ & 0.70000 \\ 
$\bar{n}_*$ & 1000.00000 \\ 
time (seconds) & 5.52760 \\
\hline
$\bar{x}-t_{Nn-1} s / \sqrt{Nn}$ & 0.56133 \\ 
$\bar{x}+t_{Nn-1} s / \sqrt{Nn}$ & 0.56609 \\ 
time (seconds) & 0.03806
\end{tabular}
\caption{Confidence intervals for iterative importance sampling. The top part of the table shows the theoretically correct method, and the bottom of the table shows the fast approximate method under the assumption that the optimisation produces stable values for $t$.}
\label{tab:entropy:confint}
\end{table}

We see that the method converges after just two steps.
As the optimisation problem is only one-dimensional,
the calculation is fast,
even with the large sample size.

Confidence intervals are presented in \cref{tab:entropy:confint}.
With $N=10$ we get a pretty decent confidence interval,
due to the large sample size $n$ for the importance sampling estimates.
Errors are similar to those reported in
\cite[Section~8]{2009:oneill:importancesampling}
for sample size $Nn=10^4$.
Note that, due to the low dimensionality,
we have that $\tau(x)=t^*$ on every iteration, as can be seen from the table.

\subsection{Open Questions}
\label{sec:openquestions}

We end this section with some open questions.

Will the effective
sample size always increase on successive iterations? All numerical
experiments so far studied confirm that this is the case, but it would
be great if we could prove it.
Moreover, if the effective sample size always increases,
will it always converge to $n$, or at least have a high probability to
come within a close distance of $n$?

Under what conditions will
imprecise importance sampling produce stable values for $t$? If we
could identify those conditions, our calculation of confidence intervals
may not require any further optimisation steps.

As we saw, the final error is essentially controlled by the total sample size,
$n\times N$.
How should we optimally choose $n$ and $N$, for a given total sample size?
Large $n$ makes for more accurate estimates of the expectation,
and therefore we expect that it will be easier to identify the optimal
$t^*$ with larger $n$---this is confirmed by our numerical experiments.
So, probably, it is prudent to choose $n$ larger than $N$, and
ideally at least large enough to allow stable estimates (in the sense of $t$).
Currently, we do not know how to do that.

In relation to \cref{sec:estimation}, a theoretical question is under
what circumstances will
the imprecise importance sampling estimate $\estlow(X)$ be consistent?
Standard (not self-normalised) importance sampling
is written in the form of a sample mean, so in that case
$\estlow(X)$ is consistent
if the importance sampling estimators $\est(X,t)$
form a Glivenko-Cantelli class.
In general, the confidence intervals that we obtained in our numerical examples
indicate that all the examples we investigated have
good consistency properties (well, at least for $\estlow(X)$,
not necessarily for $\estupp(X,X')$).
It would be nice if we had a simple method for
deriving bounds on the Talagrand functional
for standard imprecise importance sampling,
for instance using some of the methods
described in \cite[Section~2.3 et. seq.]{2014:talagrand}.
It would be even nicer if we could
establish sufficient conditions for consistency
of self-normalised imprecise importance sampling.

\section{Conclusion}
\label{sec:conclusion}

We developed a theoretical framework
for estimating the minimum of an unknown function, given
an estimator for that function.
This allowed us then to study estimation of lower previsions.
For estimators
that have the form of a sample mean,
we identified the Glivenko-Cantelli condition as
a sufficient condition for consistency,
and we bounded the bias in the sub-Gaussian case by means of Talagrand's functional,
and identified when consistency fails.

The theoretical bounds, based on Talagrand's functional, are not practical
in the sense that there is no quick and easy way to evaluate them
analytically.
Therefore,
we proposed a new `upper' estimator,
which can be used along with the standard
lower envelope estimator.
This allowed us to construct a practical confidence interval.
Unfortunately, we could not say much about the consistency
of this upper estimator,
and our numerical experiments confirmed that this estimator
may not perform too well in practice:
in some of our importance sampling examples, the upper bound estimate
was far more conservative than the lower bound.
However, we identified one situation under which the upper estimator
is unbiased: this is precisely when the lower bound provides
`stable' estimates, in the sense that we described.

As a specific application of this theoretical framework,
we then described how envelopes of importance sampling estimates
can be used to estimate lower previsions. The key observation that
makes this possible is that importance sampling allows us to estimate
means not just from the distribution that we are sampling from, but
from an entire neighbourhood of distributions around the sampling
distribution. Through straightforward optimisation over the importance
sampling weights, we can therefore estimate lower previsions without
having to, say, draw samples from all extreme points of the credal
set. This technique is simple
and is readily applicable for medium sized problems.

We saw that, especially in higher dimensions, taking envelopes over the
weights may not work very well, due to poor effective sample sizes
especially when the optimal distribution is far away from the sampling
distribution. We revisited
the iterative
procedure proposed in \cite{2017:troffaes::impmc,2017:fetz::montecarlo},
which naturally moves the sampling distribution towards the
optimal distribution. We demonstrated how this led to a much quicker
estimate with far less computational power required.
In this paper, we also studied the confidence intervals that arise
from this method, and we saw that we can compute them very quickly,
at least if the procedure is `stable' in a specific sense.

Whilst the procedure that we have described will work well
for medium sized problems, we foresee that for really large scale problems,
the effective sample size may still be too limited to ensure
that the optimal distribution can be identified at all.
In such cases, perhaps the credal set could scale throughout the algorithm,
in order to ensure a reasonable effective sample size, and therefore to help
convergence of the algorithm.

Another idea is to use importance sampling to explore only a very
small region of the credal set, but then to use the resulting
derivative information to move the sampling distribution
in the right direction.  A problem
with this however is that the derivatives obtained are quite noisy,
and in practice we have not found a good way of using these noisy
derivatives to ensure convergence.

Despite the many open questions listed in \cref{sec:openquestions},
iterative importance sampling for lower
previsions seems promising. Even if we cannot answer the above
questions yet, we can obtain accurate confidence intervals
for envelopes of more or less arbitrary parameterized sets of densities
and arbitrary gambles.
We are thus, in principle,
no longer restricted to specific gambles or dependent on conjugacy
or other special analytical properties of our credal set
in order to work with lower previsions.

Finally, we emphasize once more that
the results from \cref{sec:estimation} are not just applicable
to the estimation of lower previsions,
but to arbitrary envelopes of sets of estimators.
This might potentially be useful in other fields, for instance for
the optimisation of functions approximated by an emulator,
provided the emulator is sub-Gaussian.

\section*{Acknowledgements}

This work is partially supported by the H2020 Marie Curie ITN,
UTOPIAE, Grant Agreement No. 722734.
The author also thanks both reviewers for their insightful and
constructive comments.

\bibliographystyle{plain}
\bibliography{all}

\end{document}